\newcommand{\cE}{{\mathcal{E}}  }
\newcommand{\dt}{\delta t  }
\newcommand{\bl}[1]{ {\color{black} #1} }
\newcommand{\red}[1]{ {\color{black} #1} }
\newcommand{\teal}[1]{ {\color{black} #1} }
\newtheorem{theorem}{Theorem}
\newtheorem*{theorem*}{Theorem}
\newtheorem{lemma}[theorem]{Lemma}
\def\BState{\State\hskip-\ALG@thistlm}
\title{
Non-Markovian Noise Mitigation: \\
Practical Implementation, Error Analysis, and the Role of Environment Spectral Properties
}
\author{
	Ke Wang\thanks{Department of Mathematics, The Pennsylvania State University \texttt{wangke.math@psu.edu} }
	\and
	Xiantao Li\thanks{Department of Mathematics, The Pennsylvania State University \texttt{xiantao.li@psu.edu}}
}
\date{\today\vspace{-5mm}}
\begin{document}
\maketitle
\begin{abstract}
Quantum error mitigation(QEM),  an error suppression strategy without the need for additional ancilla qubits for noisy intermediate-scale quantum~(NISQ) devices, presents a promising avenue for realizing quantum speedups of quantum computing algorithms on current quantum devices. However, prior investigations have predominantly been focused on Markovian noise, which only occurs when the separation between the system and environment is sufficiently large.  In this paper, we propose a non-Markovian Noise Mitigation(NMNM) method by extending the probabilistic error cancellation (PEC) method in the QEM framework to treat non-Markovian noise. We present the derivation of a time-local quantum master equation where the incoherent coefficients are directly obtained from bath correlation functions(BCFs), key properties of a non-Markovian environment that will make the error mitigation algorithms environment-aware. We further establish a direct connection between the overall approximation error and sampling overhead of QEM and the spectral property of the environment. Numerical simulations performed on a spin-boson model further validate the efficacy of our approach. 
\end{abstract}
\section{Introduction}
Quantum algorithms have established theoretical advantage over classical methods in tasks such as integer factorization~\cite{shor1999polynomial} and quantum simulations~\cite{low2019hamiltonian,lloyd1996universal,berry2015simulating,low2017optimal}, yet the experimental realization of this advantage remains elusive. Current quantum platforms face two major obstacles: the limited scalability of existing hardware and the inevitable effects of noise in quantum circuits. Quantum error correction (QEC)~\cite{nielsen2010quantum} suppresses the noise with qubits overhead, which requires extra scalability of the quantum computer. In contrast, quantum error mitigation (QEM)~\cite{temme2017error,endo2018practical,cai2023quantum}, tailored for noisy intermediate-scale quantum~(NISQ) devices~\cite{preskill2018quantum}, reduces the effective noise through a sampling overhead rather than an increase in qubit resources. 

Continuous QEM is a scheme that has been observed to be compatible with digital quantum computers ~(DQC) and certain regimes of analog quantum computers ~(AQC)~\cite{sun2021mitigating}. In DQC, gate-based quantum circuits are implemented by using digital pulses, where the noises are simplified to a quantum channel before or after the ideal gate~\cite{garcia2024mitigating}. Both DQC and AQC inherently involve continuous quantum state evolution, allowing master-equation-based treatments of noise to closely mimic experimental procedures~\cite{cai2023quantum}.    
In this framework, the change of the quantum state on a noiseless circuit follows the time evolution according to the Liouville von Neumann equation,
\begin{equation}\label{eq: ideal-rho}
    \partial_t\rho_I(t) = -i\left[H_S,\rho_I(t)\right]. 
\end{equation}
This ideal operation, without the interference from the environment noise,  is denoted by a unitary channel $\mathcal{E}_I$,  such that $\rho_I(t+\delta t) = \mathcal{E}_I(t+\delta t, t)\rho_I(t)$ for any time interval $\delta t$. 

On the other hand, in the presence of Markovian noise, the dynamics of the state (denoted by $\rho_N$) follows the Lindblad dynamics,
\begin{equation}\label{eq: lindblad}
    \partial_t\rho_N(t) = -i\left[H_S,\rho_N(t)\right] + \mathcal{L}_N\rho_N(t).
\end{equation}
Similarly, the noisy operation induces a quantum channel $\mathcal{E}_N$ such that $\rho_N(t+\delta t) = \mathcal{E}_N(t+\delta t, t)\rho_N(t)$. A direct derivation of the Lindblad equation \cite{carmichael2013statistical} shows that  $\mathcal{L}_N$ could include both the {coherent} and {incoherent noise}.

In order to mitigate the noise introduced by the dissipative operator $\mathcal{L}_N$ in \cref{eq: lindblad}, the recovery operator $\mathcal{E}_Q(t,t+\delta t)$ can be introduced, with the idea to reverse the influence from the noise, i.e., $\mathcal{E}_Q(t,t+\delta t)\mathcal{E}_N(t+\delta t, t)\approx \mathcal{E}_I(t+\delta t, t)$ \cite{temme2017error}. The design of  $\mathcal{E}_Q(t,t+\delta t)$  often involves propagating the noise operator  $\mathcal{L}_N$ operator backward in time, and we reversed $t$ and $t+\delta t$ in $\cE_Q$ to reflect this perspective. 
Unfortunately, the recovery operation is generally a non-physical operation, in that it is no longer a quantum channel.  Temme et al. \cite{temme2017error} proposed the  \emph{probabilistic error cancellation} (PEC) method, where the recovery operator is projected to a physical basis, $\{\mathcal{B}_\ell\}_\ell$, which can be directly implemented on quantum devices.    Specifically,  $\mathcal{E}_Q(t,t+\delta t) = \gamma(t,\delta t)\sum_\ell \alpha_\ell(t,\delta t)p_\ell(t,\delta t)\mathcal{B}_\ell$ with probability $p_\ell$, the sign $\alpha_\ell = \pm 1$ and the normalization constant $\gamma(t,\delta t) $. Thus, the expectation of an observable $O$ on the state $\rho_I(t+\delta t)$ after one step $\delta t$ can be approximated as follows.
\begin{equation}\label{eq:introPEC}
    \tr\bigl(O\rho_I(t+\delta t)\bigr) = \gamma(t,\delta t) \sum_\ell\alpha_\ell p_\ell \tr\left(O\mathcal{B}_\ell\mathcal{E}_N(t)\rho_N(t)\right).
\end{equation}
In practice, this scheme is repeated for each layer that corresponds to a time step in continuous QEM and the ensemble can be implemented via a direct Monte Carlo sampling with respect to the joint density of $\{p_\ell\}$.  This PEC algorithm has been shown with improved gate fidelity in {superconducting qubits}~\cite{song2019quantum} and trapped ion~\cite{zhang2020error} systems,  and applied to various quantum algorithms, including variational quantum eigensolvers~\cite{strikis2021learning,jose2022error} and dynamic simulation of Ising spin chain~\cite{van2023probabilistic},  Fermi-Hubbard model~\cite{chen2023error} and Lindblad simulations~\cite{guimaraes2023noise}.

While quantum error mitigation (QEM) methods for Markovian noise are relatively well-established, the dynamics of many open quantum systems fall into the non-Markovian regime,  \teal{which arises when the coupling between the system and environment is  large. The corresponding strategies for general environment noise in this regime remain largely underexplored.} In contrast to Markovian systems, which can be described by the Lindblad form, non-Markovian open quantum systems often lack a universal master equation. Such observations have motivated many approaches to develop appropriate mathematical descriptions \cite{tanimura2020numerically,breuer2004time,suess_hierarchy_2014,li2021markovian,mascherpa2017open}. More important to QEM is the fact that non-Markovian dynamics often exhibit memory effects that complicate noise manipulation. A direct consequence of these memory effects is the breakdown of zero-noise extrapolation (ZNE) \cite{temme2017error}: The presence of memory prevents one from rescaling the system Hamiltonian to provide a noise-scaling factor, a technique used in \cite{temme2017error} to achieve amplification of the noise. As a result, commonly used gate-folding techniques cannot be directly applied in the presence of non-Markovian noise.

Meanwhile, probabilistic error cancellation (PEC) still remains a viable option for non-Markovian noise. A suitable description of non-Markovian noise and a proper choice of the basis operations $\mathcal{B}_\ell$ remain two important challenges to this problem. 
Hakoshima et al.\ \cite{hakoshima2021relationship} proposed a PEC  strategy with time-dependent $\mathcal{B}_\ell(t)$ that leverages the canonical form of the quantum master equation \cite{hall2014canonical}, derived under an invertibility assumption. One noteworthy observation in their work is that the incoherent parameter can become negative in the non-Markovian regime, which, within the PEC framework, {does not increase } the sampling overhead. However, unlike many studies of non-Markovian dynamics, where bath properties play a critical role, the approach \cite{hakoshima2021relationship} assumed knowledge of the incoherent coefficients without explicitly connecting them to bath properties. Consequently, the influence of bath properties on the resource overhead remains unclear.  Liu et al.\cite{liu2024non} developed a non-Markovian PEC method, by leveraging the Choi channel representation of a non-Markovian process. Further, by deriving a $\chi-$matrix representation,  Markovian PEC methods with standard operation basis \cite{endo2018practical} can be applied directly. 
Ahn and co-workers \cite{ahn2023non,ahn2024non} investigated the PEC approach with Dirac Gamma matrices for two-qubit gate operations. Their approach, however, is limited to a specific non-Markovian noise model (Caldeira–Leggett) and the analysis only considered single-step error mitigation, without identifying the sampling complexity in general.

In this paper, we propose a non-Markovian Noise Mitigation (NMNM) approach specifically designed for non-Markovian noise and our algorithm directly incorporates the environment properties. In particular:
\begin{itemize}
    \item \textbf{Time-dependent noise operator}: We present a straightforward derivation of a time-local quantum master equation with a time-dependent noise operator. This noise operator naturally separates into coherent and incoherent terms. Notably, the incoherent term can be written in the GKS form \cite{gorini1976completely},  albeit with possible negative incoherent coefficients, which are directly linked to the bath’s spectral properties.

    \item \textbf{PEC steps with a superoperator basis}: Unlike the approach in \cite{hakoshima2021relationship}, we perform the PEC steps by projecting the recovery operator onto a \emph{time-independent} superoperator basis \(\mathcal{B}_\ell\), as commonly employed in standard PEC procedures, thereby simplifying the implementation of the algorithm.

    \item \textbf{Resource overhead analysis}: We provide rigorous estimates of the approximation error, together with the resource overhead, expressed directly in terms of the bath correlation function and the spectral density. In particular, we introduce the concept of an \emph{effective spectral parameter} for the environment, $\teal{G_{\mathrm{env}}}$,  which comes from the analytic properties of the bath correlation function. We prove that the norm of this effective parameter provides a tight upper bound on both the approximation error and the overall sampling overhead. Therefore, our results reveal how certain spectral properties of the environment critically influence the effectiveness of QEM.

    \item \textbf{Noise simulation and validation}: We describe how to simulate non-Markovian noise on classical devices, enabling preliminary testing of QEM algorithms before deploying them on quantum hardware. Numerical experiments on the spin-boson model are presented to demonstrate the effectiveness of the proposed QEM method.
\end{itemize}

We summarize our theoretical results as follows.
\begin{theorem*}[Informal version of \cref{thm:bias,thm:gamma,thm:complexity}]
For a quantum circuit coupled to an environment with bath correlation functions~\cite{meier1999non, ritschel2014analytic} $C_{j,k}(t) = \sum_{\mu}g^*_{j,\mu}g_{k,\mu}e^{i\omega_\mu t}, t>0, \omega_\mu\in\mathbb{C}$,
    the Non-Markovian Noise Mitigation(NMNM) method with $N_r$ number of samples and discrete time interval $\delta t$ suffices to approximate the ideal expectation $\tr(O\rho_I(T))$  within an additive error $\epsilon$, provided that
   \begin{equation}
       \delta t = O\left( \frac{\epsilon}{\lambda^2(T+1/4)\teal{G_{\mathrm{env}}}}\right),\quad N_r  = \Omega\left(\frac{\exp(\lambda^2 T \teal{G_{\mathrm{env}}})}{\epsilon^2}\right),
   \end{equation}
   where the bound $\teal{G_{\mathrm{env}}}$ are  related to the spectral properties of the environment as follows
   \begin{equation}
       \teal{G_{\mathrm{env}}} = 2\sum_\mu(\sum_{j}|g_{j,\mu}|)^2\frac{1}{\Im(\omega_\mu)}.
   \end{equation}
\end{theorem*}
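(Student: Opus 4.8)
The plan is to establish a time-local master equation of GKS form, read off its incoherent coefficients from the bath correlation functions (BCFs), run the standard per-step PEC bookkeeping, and finally assemble the per-step estimates into a global bias bound (\cref{thm:bias}), a normalization/overhead bound (\cref{thm:gamma}), and a sample-complexity bound (\cref{thm:complexity}). First I would start from the second-order (Born) expansion of the system--environment dynamics in the interaction picture with respect to $H_S$ and trace out the bath to obtain $\partial_t\rho_N(t) = -i[H_S,\rho_N(t)] + \mathcal{L}_N(t)\rho_N(t)$, where the dissipator's coefficient matrix is a time-convolutionless integral of the $C_{j,k}$. The analytic heart of the estimate is that each $C_{j,k}(t) = \sum_\mu g^*_{j,\mu}g_{k,\mu}e^{i\omega_\mu t}$ decays like $e^{-\Im(\omega_\mu)t}$, so $\int_0^\infty |C_{j,k}(s)|\,ds \le \sum_\mu |g_{j,\mu}||g_{k,\mu}|/\Im(\omega_\mu)$; summing over the coupling indices and using $\sum_{j,k}|g_{j,\mu}||g_{k,\mu}| = (\sum_j |g_{j,\mu}|)^2$ produces exactly $G_{\mathrm{env}}$ (the factor $2$ arising from the Hermitian combination of $C_{j,k}$ and its conjugate in the dissipator). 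This yields the one quantity everything else inherits: a uniform-in-$t$ operator-norm bound $\|\mathcal{L}_N(t)\| = O(\lambda^2 G_{\mathrm{env}})$, the dissipator being second order in the coupling $\lambda$.

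For the single step I would expand $\mathcal{E}_N(t+\delta t,t) = \mathcal{I} + \delta t\,\mathcal{L}_N(t) + O(\delta t^2)$ in the interaction picture, so the exact recovery $\mathcal{E}_Q = \mathcal{E}_I\mathcal{E}_N^{-1}$ reduces there to $\mathcal{I} - \delta t\,\mathcal{L}_N(t) + O(\delta t^2)$. Projecting $-\delta t\,\mathcal{L}_N(t)$ onto the fixed superoperator basis $\{\mathcal{B}_\ell\}$ gives the quasi-probability form $\mathcal{E}_Q = \gamma(t,\delta t)\sum_\ell \alpha_\ell p_\ell\mathcal{B}_\ell$; because the one-norm of the coefficient vector exceeds unity only by the magnitude of the dissipator, $\gamma(t,\delta t) = 1 + O(\lambda^2 G_{\mathrm{env}}\,\delta t)$, while the per-step truncation remainder is $O(\lambda^2 G_{\mathrm{env}}\,\delta t^2)$ (requiring in addition bounds on $\mathcal{L}_N(t)^2$ and $\dot{\mathcal{L}}_N(t)$, which the same BCF integrals supply). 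Note that although $\mathcal{B}_\ell$ is time-independent, the coefficients $\alpha_\ell p_\ell$ inherit the time dependence of $\mathcal{L}_N(t)$.

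To obtain the global bias I would write the quantity computed in expectation as $\tr(O\,\Phi\,\rho(0))$ with $\Phi = \prod_k \mathcal{E}_Q(t_k)\mathcal{E}_N(t_{k+1},t_k)$, and compare with $\tr(O\,\rho_I(T))$. Expanding $\Phi$ around the ideal evolution, the leading term reproduces $\mathcal{E}_I(T,0)$ exactly, while each first-order contribution has the shape $\mathcal{E}_I(T,t_{k+1})R_k\mathcal{E}_N(t_{k+1},t_k)\mathcal{E}_I(t_k,0)$ with $\|R_k\| = O(\lambda^2 G_{\mathrm{env}}\,\delta t^2)$; crucially the maps dressing $R_k$ are unitary or CPTP — never the nonphysical recovery maps — so each term is $O(\|O\|\lambda^2 G_{\mathrm{env}}\,\delta t^2)$ and the $T/\delta t$ of them sum to a bias $O(\lambda^2 G_{\mathrm{env}}(T+1/4)\delta t)$, the constant $1/4$ appearing as a boundary remainder that does not scale with $T$; setting this to $\epsilon$ gives the stated $\delta t$. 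For the overhead I would multiply the per-step normalizations, $\gamma_{\mathrm{tot}} = \prod_k \gamma(t_k,\delta t) = (1 + O(\lambda^2 G_{\mathrm{env}}\,\delta t))^{T/\delta t} = \exp(O(\lambda^2 G_{\mathrm{env}} T))$, which is the source of the exponential; since the PEC estimator returns $\gamma_{\mathrm{tot}}$ times an average of $\pm$-signed outcomes bounded by $\|O\|$, Hoeffding's inequality gives $N_r = O(\gamma_{\mathrm{tot}}^2\|O\|^2/\epsilon^2) = \Omega(\exp(\lambda^2 G_{\mathrm{env}} T)/\epsilon^2)$, completing \cref{thm:complexity}.

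The main obstacle I expect is establishing the uniform-in-$t$ bound $\|\mathcal{L}_N(t)\| = O(\lambda^2 G_{\mathrm{env}})$ rigorously, since it is the single linchpin underlying both the bias prefactor and the per-step $\gamma$ increment: it requires justifying the interchange of the time integral with the $\mu$-sum, which is precisely where the analytic hypothesis $\Im(\omega_\mu)>0$ and the absolute convergence of $\sum_\mu |g_{j,\mu}||g_{k,\mu}|/\Im(\omega_\mu)$ are indispensable. A secondary difficulty is keeping the bias polynomial in $T$ despite the exponential overhead; this is resolved by the expansion above, which dresses every local remainder with physical maps so that $\gamma_{\mathrm{tot}}$ never enters the bias, together with a geometric resummation of the higher-order-in-$R$ terms (each of order $(\lambda^2 G_{\mathrm{env}} T\delta t)^m$) valid under $\lambda^2 G_{\mathrm{env}} T\delta t < 1$.
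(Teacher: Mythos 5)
Your proposal is correct and follows essentially the same route as the paper: a second-order time-local master equation whose incoherent coefficients come from the BCFs, the bound $\int_0^\infty\abs{C_{j,k}(s)}\,\mathrm{d}s$ yielding $G_{\mathrm{env}}$ as the linchpin constant, a per-step recovery $I-\delta t\,\mathcal{L}_N$ projected onto the fixed basis with $\gamma(t,\delta t)=1+O(\lambda^2 G_{\mathrm{env}}\delta t)$, a telescoped $O(\lambda^2 G_{\mathrm{env}}(T+1/4)\delta t)$ bias, and Hoeffding with $\gamma_{\textbf{tot}}=e^{O(\lambda^2 TG_{\mathrm{env}})}$. The only cosmetic differences are that the paper bounds the one-step defect as a rectangle-rule quadrature error (its derivative bound is exactly your $\dot{\mathcal{L}}_N$ control, producing the $e^{-\theta t}$-decaying term whose geometric sum is your ``boundary'' $1/4$) and uses sequential triangle-inequality telescoping rather than your product expansion with resummation.
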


The exponential dependence of the sampling complexity meets the lower bound \cite{quek2024exponentially, takagi2023universal,takagi2022fundamental} in general. On the other hand, the exponent is also proportional to the environment parameter $\teal{G_{\mathrm{env}}}$, which we will refer to as an effective spectral parameter, and it provides an important guideline for implementing the QEM algorithm. In particular, $\teal{G_{\mathrm{env}}}$ depends on the locations of the poles of the spectral bath density.  

The organization of the remaining part of this paper is as follows. In \cref{Preliminary}, we review the idea of PEC. In \cref{NMEM}, the main method for NMNM is outlined. As our main theoretical contribution, the relation between the error, including the approximation and statistical error, and the spectral properties of the environment are discussed in \cref{AnalysisCost}. The performance of our QEM method is illustrated by numerical experiments based on some spin-boson models in \cref{NumericalExperiment}.

\section{A Brief Introduction to Probabilistic Error Cancellation(PEC) }\label{Preliminary}
In this section, we briefly review the setup of Markovian PEC. 
We discretize the total evolution time $T$ into $M$ steps so that $T = M\delta t$ with size $\delta t$. The ideal and noisy evolution of the quantum states from $t$ to $t+\delta t$ are denoted by $\mathcal{E}_I(t+\delta t, t)$ and $\mathcal{E}_N(t +\delta t, t)$ following \cref{eq: ideal-rho} and \cref{eq: lindblad}, respectively. Namely
\begin{equation} 
    \rho_I(t+\delta t) = \mathcal{E}_I(t+\delta t, t)\rho_I(t) , \quad \rho_N(t+\delta t)= \mathcal{E}_N(t+\delta t, t)\rho_N(t). 
\end{equation}

We aim to find a recovery operator $\mathcal{E}_Q(t, t+\delta t)$ to mitigate the error, i.e., 
\[
\mathcal{E}_Q(t, t+\delta t) \mathcal{E}_N(t+\delta t, t)  \approx \mathcal{E}_I(t+\delta t, t).
\]

For Markovian noise, $\mathcal{E}_Q \approx I - \delta t \mathcal{L}_D$, with $\mathcal{L}_D$ being a Lindblad operator \cite{temme2017error}. Although it is not a physical operator, it can be expressed as a linear combination of completely positive (CP) operator basis $\{\mathcal{B}_\ell:  \mathcal{B}_\ell\cdot = B_\ell \cdot B_\ell^\dag\}_\ell$,
\begin{equation}\label{eq:PEC}
    \mathcal{E}_Q(t, t+\delta t) = \sum_\ell q_\ell(t, \delta t)\mathcal{B}_\ell = \gamma(t, \delta t)\sum_\ell \alpha_\ell(t,\delta t) p_\ell(t,\delta t)\mathcal{B}_\ell. 
\end{equation}
Here the coefficients are defined as follows,
\begin{equation}\label{eq:normalizationconstant}
    \begin{aligned}
    \alpha_\ell(t,\delta t) =& \text{sgn}(q_\ell(t,\delta t)), \\
    p_\ell(t,\delta t) = & \frac{\abs{q_\ell(t,\delta t)}}{\gamma(t,\delta t)},  \\
    \gamma(t,\delta t) = & \sum_\ell|q_\ell(t,\delta t)|. 
\end{aligned}
\end{equation}
In particular, $ \gamma(t,\delta t) $ in \cref{eq:normalizationconstant} is a normalizing factor to ensure that $\sum_\ell p_\ell(t,\delta t) = 1$. More importantly, it is indicative of the overall sampling complexity \cite{sun2021mitigating}.

We list the 16 basis operations \cite{endo2018practical} of one qubit in \cref{BasisTable} for quick reference.

\begin{table}[!h]
\begin{center}
\scriptsize
\begin{tabular}{||c|c|c|c|c|c|c|c||}
\hline
1 &$[I]$
& 2 & $[\sigma^{x}] $
& 3 & $[\sigma^{y}] $
& 4 & $[\sigma^{z}]$
\\ 
5 & $[R_x] = [\tfrac{1}{\sqrt{2}}(1 + i\,\sigma^x)]$
& 6 & $[R_y] = [\tfrac{1}{\sqrt{2}}(1 + i\,\sigma^y)]$
& 7 & $[R_z] = [\tfrac{1}{\sqrt{2}}(1 + i\,\sigma^z)$]
& 8 & $[R_{yz}] = [\tfrac{1}{\sqrt{2}}(\sigma^y + \sigma^z)$]
\\ 
9 & $[R_{zx}] = [\tfrac{1}{\sqrt{2}}(\sigma^z + \sigma^x)$]
& 10 & $[R_{xy}] = [\tfrac{1}{\sqrt{2}}\teal{(\sigma^x + \,\sigma^y)}$]
& 11 & $[\pi_x] = [R_z^3 R_x^3][\pi][R_xR_z]$
& 12 & $[\pi_y] = [R_x][\pi][R_x^3]$
\\ 
13 & $[\pi_z] = [\pi]$
& 14 & $[\pi_{yx}] = [R_z^3 R_x^3][\pi][R_x^3R_z]$
& 15 & $[\pi_{xz}] = [R_x][\pi][R_x^3R_z^2]$
& 16 & $[\pi_{xy}] = [\pi][R_x^2]$
\\ \hline
\end{tabular}
\caption{The standard basis of probabilistic error cancellation in ~\cite[Table 1]{endo2018practical}, including the superoperator notation $[U]\rho := U\rho U^\dag$ and the projection operator $[\pi] = [\ketbra{0}]$. Notice that the last six basis operations $\mathcal{B}_\ell, \ell = 11,\cdots, 16$ contain the projection operator. }
\label{BasisTable}
\end{center}
\end{table}

The basis operations can be easily generalized to multiple qubits by tensor products. 
Overall the operations from $0$ to $T = M\delta t$, as illustrated in \cref{fig:QEM},  are given by,
\begin{equation}\label{eq:nstepQEM}
\begin{aligned}
    \rho_I(T) &= \prod_{k=0}^{M-1} \mathcal{E}_I((k+1)\delta t, k\delta t)\rho(0)\\
    &\approx  \prod_{k=0}^{M-1} \mathcal{E}_Q(k\delta t,(k+1)\delta t)\mathcal{E}_N((k+1)\delta t, k\delta t )\rho(0) \\
    & =  \gamma_{\textbf{tot}}\sum_{\Vec{\ell}}\alpha_{\Vec{\ell}}p_{\Vec{\ell}}\prod_{k=0}^{M-1}\mathcal{B}_{\ell_k}\mathcal{E}_N((k+1)\delta t, k\delta t )\rho(0)=:\rho_Q(T),    
\end{aligned}
\end{equation}
where $\Vec{\ell} = (\ell_0,\cdots, \ell_{M-1})$, $\gamma_{\textbf{tot}} = \prod_{k=0}^{M-1}\gamma(k\delta t, \delta t)$, $\alpha_{\Vec{\ell}} = \prod_{k=0}^{M-1}\alpha_{\ell_k}(k\delta t,\delta t)$ and $p_{\Vec{\ell}} = \prod_{k=0}^{M-1}p_{\ell_k}(k\delta t,\delta t)$. The ultimate goal of QEM is to approximate the expectation of an observable $O$, using Monte Carlo sampling, i.e.,
\begin{equation}\label{eq:QEM_measure}
    \tr(O\rho_I(T)) =  \gamma_{\textbf{tot}}\sum_{\Vec{\ell}}\alpha_{\Vec{\ell}}p_{\Vec{\ell}}\tr(O\prod_{k=0}^{M-1}\mathcal{B}_{\ell_k}\mathcal{E}_N((k+1)\delta t,k\delta t)\rho(0)).
\end{equation}

\begin{figure}[!htp]
    \centering
    
    \begin{subfigure}{0.65\linewidth}
        \centering
        \includegraphics[width=\linewidth]{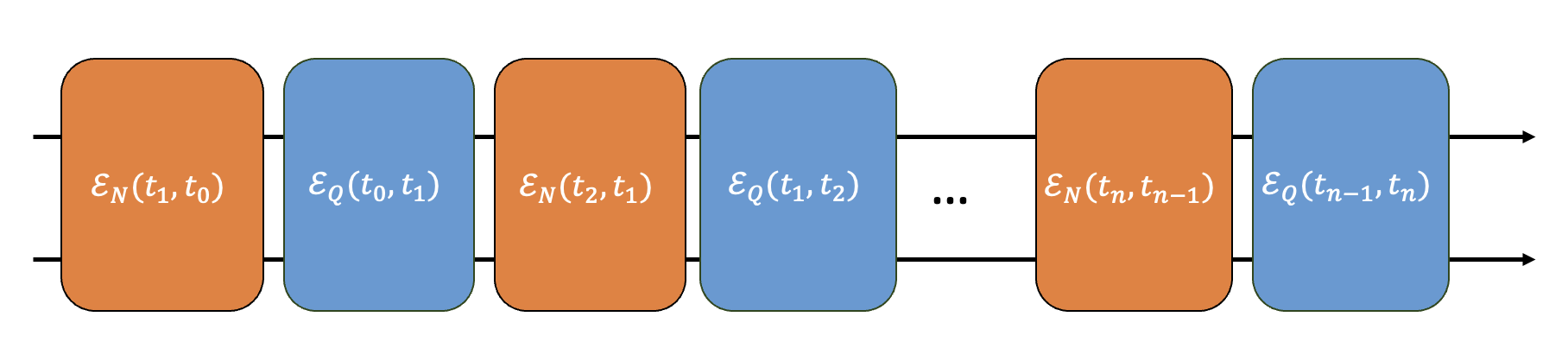}
        \caption{}
        \label{fig:QEM}
    \end{subfigure}

    \begin{subfigure}{0.65\linewidth}
        \centering
        \includegraphics[width=\linewidth]{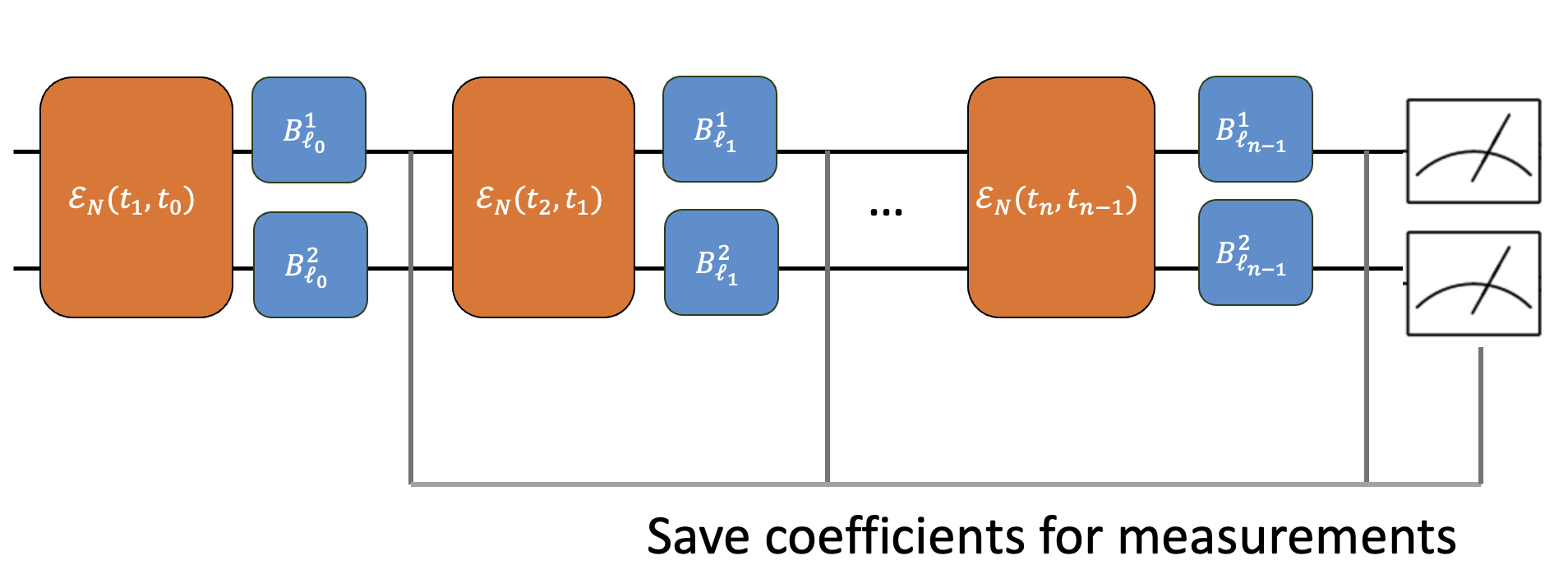}
        \caption{}
        \label{fig:TrajectoryQEM}
    \end{subfigure}

    \caption{ Schematic representation of the multi-step NMNM method in \cref{eq:nstepQEM}. (a) Interlacing structure of the multi-step NMNM method; (b) A sample circuit used in the Monte Carlo sampling implementation. }
\end{figure}

 These standard operation basis are linearly independent and have been designed for local and Markovian noise. Due to their success in Markovian QEM and simplicity, we continue to use {the} basis, while keeping the quasiprobabilities time-dependent. Nevertheless, there are other alternatives, including the time-dependent basis in  \cite{hakoshima2021relationship}, or the Dirac Gram matrices approach by Ahn and coworkers \cite{ahn2023non,ahn2024non}. 
 The quasi-probabilistic expression is also consistent with the general result in \cite{rossini2023single} where the recovery operator is written as the difference between two CP maps.

\section{Non-Markovian Noise Mitigation}\label{NMEM}
In this section, we present the noise mitigation framework tailored for non-Markovian open quantum systems, extending the continuous PEC method in~\cref{Preliminary}. The key steps are deriving an appropriate non-Markovian noise model and constructing the recovery operation properly. We introduce the underlying master equation formalism in \cref{masterEqNM}, and one-step NMNM in \cref{1stepNMNM} and subsequently analyze the bias and sampling overhead of the multiple-step scenario in \cref{AnalysisCost}. 

\subsection{Time-Local Master Equation for Non-Markovian Noise}
\label{masterEqNM}
The dynamics of a quantum system under the influence of an environment is often referred to as an open quantum system \cite{breuer2002theory}. Mathematically, the study of an open quantum system often starts with  a composite Hamiltonian of the form \cite{breuer2002theory}
\begin{equation}\label{Htot}
    H_\text{tot} = H_S \otimes I_B + I_S \otimes H_B + \lambda H_{SB},
\end{equation}
where $H_S$ and $H_B$ are respectively the system and bath Hamiltonians, and $\lambda$ is the coupling strength between the system and the environment. In addition,  the total system evolves according to the von Neumann equation
\begin{equation}\label{lvn}
    \partial_t\rho_\text{tot}(t) = -i[H_\text{tot}, \rho_\text{tot}],\quad \rho_\text{tot}(0) = \rho_S(0)\otimes \rho_B.
\end{equation}
Here $\rho_B$ is the thermal state 
$\rho_B = \frac{e^{-\beta H_B}}{Z_B}$  
with parameter $\beta = 1/\kappa_B T$ and $\kappa_B$ is the Boltzmann constant. The normalization factor $Z_B= \Tr(e^{-\beta H_B})$ is the reservoir partition function.

We also follow the standard assumption that the interaction term is given by \cite{breuer2002theory,carmichael2013statistical}, 
\begin{equation}
    H_{SB} = \sum_{j = 1}^J S_j\otimes B_j.
\end{equation}
Without loss of generality,  $S_j$ and $B_j$ can be assumed to be Hermitian. In addition, by proper shifting, we can assume that $\tr(B_j \rho_B)=0$ \cite{carmichael2013statistical}.  

A key determining factor in the dynamics of the open quantum system is the bath correlation function (BCF), e.g., as indicated by the Feynman-Vernon influence functional representation~\cite{breuer2004time,tamascelli2018nonperturbative,jin2008exact}. With $B_j(t) = e^{itH_B }B_j e^{-it H_B }$, the two-point BCF is defined as follows,
\begin{equation}
    C_{j,k}(t) = \tr\left(B_j(t) B_k(t) \rho_B\right).
\end{equation}
It satisfies the symmetry property $C_{j,k}(t) = C_{k,j}(-t)^*, t<0$. For example, for Gaussian environment, it completely determines the dynamics of non-Markovian open quantum system~\cite{tamascelli2018nonperturbative}.

To allow explicit derivations, we express the BCF in the following form,
\begin{equation}\label{eq: Cjkt}
     C_{j,k}(t) = \sum_{\mu}g^*_{j,\mu}g_{k,\mu}e^{i\omega_\mu t},\quad\omega_\mu\in\mathbb{C}, t>0.
\end{equation}
 It is important to notice that for continuous bath, the BCF is often expressed as a Fourier integral, which can be reduced to \cref{eq: Cjkt} by a quadrature formula or pole expansion \cite{ritschel2014analytic}. In the latter case, the frequency often takes complex values with a positive imaginary part.

By a direct perturbation analysis of \eqref{lvn}, i.e., by seeking the solution as 
\( \rho_\text{tot} = \rho_\text{tot}^{(0)}+\lambda  \rho_\text{tot}^{(1)}+ \lambda^2 \rho_\text{tot}^{(2)} + \cdots  \), 
followed by tracing out the environment, the dynamics of the system can be expressed as follows,
\begin{equation}\label{eq: M2}
\begin{aligned}
        \rho_S(t) &=U_S(t)\rho_S(0)U_S^\dag(t)+\lambda^2\sum_{j,k=1}^{J}  U_S(t)   \int_0^t \int_0^t {S}_{j}(t_1) \rho_S(0)  {S}_{k}(t_2)  C_{k,j}(t_2-t_1) dt_2dt_1 U_S^\dag(t)\\
  &-U_S(t)\int_0^t\int_0^{t_1}\rho_S(0) {S}_{k}(t_2){S}_{j}(t_1)C_{k,j}(t_2-t_1)\mathrm{d}t_2\mathrm{d}t_1U_S^\dag(t)\\
  &-U_S(t)\int_0^t\int_0^{t_1}{S}_{j}(t_1){S}_{k}(t_2)\rho_S(0) C_{j,k}(t_1-t_2)\mathrm{d}t_2\mathrm{d}t_1U_S^\dag(t) +O(\lambda^4t^4),\\    
\end{aligned}
\end{equation}
where $U_S(t):=e^{-itH_S }$ and $S_j(\tau) = e^{iH_S\tau }S_j e^{-iH_S \tau} $ is the dynamics of the operator $S_j$ in the Heisenberg picture.

By defining the new jump operators
\begin{equation}\label{That}
    \widehat{T}_\mu = \sum_j g_{j,\mu} S_j,
\end{equation}
 and calculating the derivative of the evolution operator, we obtain a time-local quantum master equation
\begin{equation}\label{eq: nM}
    \begin{aligned}
       \partial_t\rho_S(t) 
    &= -i[H_S,\rho_S(t)]+\lambda^2\sum_{\mu = 1}^{\mu_{\max}} \int_0^t\mathcal{L}(\widehat{T}_\mu , \widehat{T}_\mu(-t_2)e^{i\omega_{\mu}t_2})\mathrm{d}t_2
    \rho_S(t)+O(\lambda^4) \\
    &= -i[H_S,\rho_S(t)]+\mathcal{L}_N(t)\rho_S(t)+O(\lambda^4).
    \end{aligned}
\end{equation} 
Again we used the notation $\widehat{T}_\mu(\tau) = {e^{iH_S\tau }\widehat{T}_\mu e^{-iH_S \tau} }$ defined in the Heisenberg picture. In addition, we have defined the time-dependent noise operator that embodies a memory effect,
 \begin{equation}\label{bcf2LN}
     \mathcal{L}_N(t) \rho(t)= \lambda^2\sum_{\mu = 1}^{\mu_{\max}} \int_0^t\mathcal{L}(\widehat{T}_\mu , \widehat{T}_\mu(-t_2)e^{i\omega_{\mu}t_2})\mathrm{d}t_2
    \rho(t),
 \end{equation}
 where 
\begin{equation}\label{LFG}
    \mathcal{L}(F,G)\rho := F\rho G^\dag+G\rho F^\dag-\rho G^\dag F-F^\dag G\rho.
\end{equation}

We have derived a quantum master equation (QME) in the weak-coupling regime that will serve as the foundation for our quantum error mitigation schemes. Notably, unlike the derivation in \cite{hall2014canonical}, the QME in \cref{eq: nM} does not require the invertibility of a dynamical map {(the notion of invertibility is slightly different from that in \cite{jagadish2023noninvertibility}).} More importantly, this derivation reveals a direct connection to the bath correlation functions (BCFs), especially the distribution of their poles, as illustrated in \cref{bcf2LN}.

\subsection{One-step Noise Mitigation Algorithm}
\label{1stepNMNM}

The noisy and recovery operator in $[t, t+\delta t]$ of our NMNM method are defined as
\begin{equation}\label{eq:NMNM}
    \begin{aligned}
        & \mathcal{E}_N(t+\delta t, t)\rho(t) = \cE_I(t+\delta t, t)\rho(t) + \int_t^{t+\delta t}\mathcal{U}_S(t+\delta t-\tau)\mathcal{L}_N(\tau)\rho_N(\tau)\mathrm{d}\tau,\\
        & \mathcal{E}_Q(t,t+\delta t)\rho(t+\delta t) = (I-\delta t\mathcal{L}_N(t+\delta t))\rho(t+\delta t),
    \end{aligned}
\end{equation}
where the ideal operator $\cE_I(t+\delta t, t)\rho(t) = e^{-iH_S\delta t}\rho(t)e^{iH_S\delta t} = \bl{\mathcal{U}_S(\delta t)\rho(t)}$. It is worth mentioning that the recovery operator at time interval $[t, t+\delta t]$ depends on the integration over the full interval $[0,t+\delta t]$, underscoring the non-Markovian property. With further manipulations of the last equation, one can show that this recovery operator offers a first-order approximation to the ideal evolution $\cE_I(t+\delta t, t)\rho(t)$, which will be proved in \cref{lm: 1stepQEM}.

We proceed
to map the recovery operator to circuit operations. We 
 first expand the matrix $\widehat{T}_\mu$, and similarly the operator  $\int_0^t\widehat{T}_\mu(-\tau)e^{i\omega_\mu \tau}\mathrm{d}\tau$ from \cref{bcf2LN} to an orthogonal basis $\{V_\alpha\}_{\alpha}$ in $\mathbb{C}^{N\times N}  (N = 2^n)$, 
\begin{equation}
    \begin{aligned}
        \widehat{T}_\mu = \sum_{\alpha=1}^M f_\alpha^\mu V_\alpha,\quad \int_0^t\widehat{T}_\mu(-\tau)e^{i\omega_\mu \tau}\mathrm{d}\tau = \sum_\alpha \int_0^t\bl{v_\alpha^\mu}(\tau)\mathrm{d}\tau V_\alpha,
    \end{aligned}
\end{equation}
with the coefficients given by $f_\alpha^\mu = \Tr(\widehat{T}_\mu V_\alpha^\dag)/\tr(V_\alpha V_\alpha^\dag)$ and $ \bl{v_\alpha^\mu}(\tau) = \Tr(\widehat{T}_\mu(-\tau)e^{i\omega_\mu \tau}V_\alpha^\dag)/\tr(V_\alpha V_\alpha^\dag)$. 

Denote $A_{\alpha, \beta}(t) = \sum_\mu f^\mu_\alpha \int_0^t \bl{v^{\mu*}_\beta}(\tau)\mathrm{d}\tau$, then the recovery operator in \cref{eq:NMNM} can be rewritten as  
\begin{equation}\label{eq:ErecoveryA}
    (I-\delta t\mathcal{L}_N(t+\delta t))\rho = \rho-\delta t\lambda^2 \sum_{\alpha,\beta} A_{\alpha,\beta}(t+\delta t)(V_\alpha \rho V_\beta^\dag -\rho V_\beta^\dag V_\alpha)+{A}_{\beta,\alpha}^*(t+\delta t) (V_\alpha \rho V_\beta^\dag- V_\beta^\dag V_\alpha\rho),
\end{equation}
which begins to show some resemblance with a non-diagonal Lindblad operator. 
The coefficient matrix $A(t) = (A_{\alpha\beta}(t))_{\alpha\beta}$ can be further decomposed into the Hermitian and the skew Hermitian components
\begin{equation}
    A(t) = \Gamma(t)+ i\Xi(t), \quad \Gamma^{\dag}(t) = \Gamma(t), \;  \Xi^{\dag}(t) = \Xi(t).
\end{equation}
This separates the recovery operator into the {coherent and incoherent} parts, 
\begin{equation}\label{eq:LDLC}
{\mathcal{L}_C(t)\rho =\lambda^2i [ \sum_{\alpha,\beta} \Xi_{\alpha\beta}(t)V_\beta^\dag V_\alpha, \rho ],\quad \mathcal{L}_D(t)\rho = \lambda^2 \sum_{\alpha,\beta} \Gamma_{\alpha\beta}(t)(2V_\alpha\rho V_\beta^\dag -\rho V_\beta^\dag V_\alpha - V_\beta^\dag V_\alpha \rho).}
\end{equation}
such that $\delta t\mathcal{L}_N(t)\rho = \delta t\mathcal{L}_C(t)\rho + \delta t\mathcal{L}_D(t)\rho$.

With this simplification, the time-local quantum master equation exhibits the following compact form,
\begin{equation}\label{eq: QME}
       {\partial_t\rho_S(t) 
    = -i[H_S- \lambda^2 \Delta_S,\rho_S(t)]+\lambda^2 \sum_{\alpha,\beta} \Gamma_{\alpha\beta}(t)(2V_\alpha\rho V_\beta^\dag -\rho V_\beta^\dag V_\alpha - V_\beta^\dag V_\alpha \rho).}
\end{equation} 
The extra term in the Hamiltonian, $ \lambda^2 \Delta_S$, can be recognized as the { Lamb-shift}. Meanwhile, the remaining term resembles the non-diagonal Lindblad operator. However,  if an eigenvalue of $\Gamma_{\alpha\beta}(t)$ becomes negative, then the dynamical map associated with \cref{eq: QME} can no longer be divisible into completely positive maps \cite{breuer2009measure}, thus giving rise to non-Markovian dynamics. 

We now return to the recovery operator $\mathcal{E}_Q$, which in light of \cref{eq: QME}, should be designed to offset the effect of $\mathcal{L}_C(t)$ and $\mathcal{L}_D(t)$.  By operator splitting scheme, we can implement the {coherent and incoherent} part of the recovery operator $\mathcal{E}_Q$ separately, i.e.,
${e^{-\delta t\mathcal{L}_N(t)}=e^{-\delta t\mathcal{L}_C(t)} e^{-\delta t\mathcal{L}_D(t)} + O(\delta t^2). }$ Then, the quasi-probability can be calculated via \cref{eq:PEC} and \cref{eq:normalizationconstant}.

\subsection{Error and Complexity Analysis }\label{AnalysisCost}
With a stochastic implementation of our QEM scheme, we arrive at the following stochastic circuit operator,
\[
 \tr\bigl(O \cE_{I}(t,0) \rho_0\bigr) \approx \frac{1}{N_r} \sum_{m=1}^{N_r} 
  \tr\bigl(O      \prod_{k=0}^{M-1} \mathcal{B}_{\ell_k}^{(m)} \cE_{N}((k+1)\dt,k\dt) \rho_0\bigr). 
\]

To ensure that the estimated expectation has precision $\epsilon$ with high probability, we first choose $\dt$ so that the bias is within $\epsilon$. Namely, we can enforce 

\begin{equation}\label{bias}
    \bl{\sum_{\Vec{\ell}} \alpha_{\Vec{\ell}}}\, \mathbb{E}\left[   \prod_{k=0}^{M-1}  \mathcal{B}_{\ell_k} \cE_{N}((k+1)\dt,k\dt) \rho_0  \right] -  \ \cE_{I}(t,0) \rho_0 = O(\epsilon).
\end{equation}

Meanwhile, to estimate the statistical error, we can invoke Hoeffding's concentration inequality. Toward this end,  let us define 
\begin{equation}
   o^{(m)}=  \tr\bigl(O      \prod_{k=0}^{M-1}  \mathcal{B}_{\ell_k}^{(m)} \cE_{N}((k+1)\dt,k\dt) \rho_0\bigr), 
\end{equation}
be the outcome from the $m$th random circuit and $\Bar{o}$ be the expectation value. Then,
\begin{equation}\label{eq:propSampling}
    \mathbb{P}\left( \abs{\frac{1}{N_r} \sum_{m=1}^{N_r}o^{(m)} - \Bar{o} }>\epsilon \right) < 2 e^{ -\frac{N_r \epsilon^2}{ \gamma_\textbf{tot}} },
\end{equation}
where the total normalizing factor is given by,
\begin{equation}\label{gamma_tot}
    \gamma_\textbf{tot} = \prod_{k=0}^{M-1}\gamma(k\dt, \dt),
\end{equation}
with $\gamma(k\dt, \dt)$  defined in \eqref{eq:normalizationconstant}.

\bigskip 
 We start with the one-step error for the recovery operator {\eqref{1step-error}} in \cref{lm: 1stepQEM} and move forward to the error analysis of the multiple-step scenario afterward. Throughout this analysis, $\|\cdot\|$ and $ \|\cdot\|_1$ denotes the trace norm and element-wise matrix 1-norm.  By rescaling, we assume $\norm{S_j}=1, \forall j\in [J]$ without loss of generality.

\begin{lemma}\label{lm: 1stepQEM}
   For any $t$ and $\delta t>0$, the recovery operator $\mathcal{E}_Q(t, t+\delta t)$ defined in ~\cref{eq:NMNM} produces a one-step error that can be bounded by,
    \begin{equation}\label{1step-error}
        \norm{ \mathcal{E}_Q(t, t+\delta t)\mathcal{E}_N(t+\delta t, t) \rho(t) - \mathcal{E}_I(t+\delta t, t)\rho(t)  } \leq \delta t^2\lambda^2 (\|H_S\|G_{b,1} + G_{b,2}  e^{-\theta t}),
    \end{equation}
   where the constants are given by, 
   \begin{equation}\label{Gb12}
       G_{b,1} = 2\sum_\mu(\sum_{j}|g_{j,\mu}|)^2\frac{1}{\Im(\omega_\mu)},\quad G_{b,2} = \frac{1}{2}\sum_\mu(\sum_{j}|g_{j,\mu}|)^2,\quad \theta = \inf_\mu\Im(\omega_\mu).
   \end{equation}
  Since $G_{b,1}\leq 4G_{b,2}/\theta$, a slightly simpler one-step error bound is given by,
  \begin{equation}\label{1step-error2}
      \norm{ \mathcal{E}_Q(t, t+\delta t)\mathcal{E}_N(t+\delta t, t) \rho(t) - \mathcal{E}_I(t+\delta t, t)\rho(t)  } \leq \delta t^2\lambda^2 G_{b,2} (4\|H_S\|/\theta +  e^{-\theta t}).
  \end{equation}
\end{lemma}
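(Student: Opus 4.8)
The plan is to exploit the exact Duhamel structure built into $\mathcal{E}_N$ in \cref{eq:NMNM} and to identify the one-step residual as a \emph{right-endpoint quadrature error}. Writing $J := \int_t^{t+\delta t}\mathcal{U}_S(t+\delta t-\tau)\mathcal{L}_N(\tau)\rho_N(\tau)\,\mathrm{d}\tau$, so that $\mathcal{E}_N(t+\delta t,t)\rho(t)=\mathcal{E}_I(t+\delta t,t)\rho(t)+J$, I would apply the recovery operator $\mathcal{E}_Q(t,t+\delta t)=I-\delta t\,\mathcal{L}_N(t+\delta t)$ and expand; the ideal term cancels and leaves
\[
\mathcal{E}_Q\mathcal{E}_N\rho(t)-\mathcal{E}_I\rho(t)=J-\delta t\,\mathcal{L}_N(t+\delta t)\mathcal{E}_I\rho(t)-\delta t\,\mathcal{L}_N(t+\delta t)J.
\]
The first task is to discard the $O(\delta t^2\lambda^4)$ debris: the term $\delta t\,\mathcal{L}_N(t+\delta t)J$ is bounded by $\delta t\norm{\mathcal{L}_N}\norm{J}$, and replacing $\mathcal{E}_I\rho(t)$ by $\rho_N(t+\delta t)=\mathcal{E}_I\rho(t)+J$ in the middle term costs another $\delta t\norm{\mathcal{L}_N}\norm{J}$; since I will establish $\norm{\mathcal{L}_N(\tau)}=O(\lambda^2)$ and $\norm{J}=O(\delta t\lambda^2)$, both are absorbed into \cref{1step-error}. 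With $F(\tau):=\mathcal{U}_S(t+\delta t-\tau)\mathcal{L}_N(\tau)\rho_N(\tau)$ and $F(t+\delta t)=\mathcal{L}_N(t+\delta t)\rho_N(t+\delta t)$, what remains is exactly $\int_t^{t+\delta t}F(\tau)\,\mathrm{d}\tau-\delta t\,F(t+\delta t)$.

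Next I would extract the $\delta t^2$ scaling. Using $F(\tau)-F(t+\delta t)=-\int_\tau^{t+\delta t}F'(s)\,\mathrm{d}s$ and Fubini, this residual equals $-\int_t^{t+\delta t}(s-t)F'(s)\,\mathrm{d}s$, so its trace norm is at most $\tfrac{\delta t^2}{2}\sup_{s\in[t,t+\delta t]}\norm{F'(s)}$. The heart of the argument is the estimate of $\norm{F'(s)}$. Differentiating $F$ gives three pieces: $-\mathcal{A}\,\mathcal{U}_S(t+\delta t-s)\mathcal{L}_N(s)\rho_N(s)$ from the unitary factor (with $\mathcal{A}\cdot=-i[H_S,\cdot]$), $\mathcal{U}_S\dot{\mathcal{L}}_N(s)\rho_N(s)$, and $\mathcal{U}_S\mathcal{L}_N(s)\dot\rho_N(s)$. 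Since $\mathcal{U}_S$ is trace-norm preserving and $\mathcal{A}$ contributes a factor proportional to $\norm{H_S}$, the first piece is controlled by $\norm{H_S}\norm{\mathcal{L}_N(s)}$; the third, via $\dot\rho_N=-i[H_S,\rho_N]+\mathcal{L}_N\rho_N$, again yields a $\norm{H_S}\norm{\mathcal{L}_N}$ term plus an $O(\lambda^4)$ remainder. These are the sources of the $\norm{H_S}G_{b,1}$ contribution.

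The two decisive norm estimates come from \cref{LFG} and the representation \cref{bcf2LN}. Bounding each of the four terms of $\mathcal{L}(F,G)$ submultiplicatively with $\norm{S_j}=1$, $\norm{\widehat T_\mu}\le\sum_j|g_{j,\mu}|$, the unitary invariance of $\norm{\widehat T_\mu(-t_2)}$, and $|e^{i\omega_\mu t_2}|=e^{-\Im(\omega_\mu)t_2}$, the $\mathrm{d}t_2$-integral over $[0,\tau]$ produces $\int_0^\tau e^{-\Im(\omega_\mu)t_2}\mathrm{d}t_2\le 1/\Im(\omega_\mu)$ and hence $\norm{\mathcal{L}_N(\tau)}=O(\lambda^2 G_{b,1})$. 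For $\dot{\mathcal{L}}_N(\tau)$ the fundamental theorem of calculus removes the integral and leaves the single kernel $\mathcal{L}(\widehat T_\mu,\widehat T_\mu(-\tau)e^{i\omega_\mu\tau})$, whose norm carries the undamped factor $e^{-\Im(\omega_\mu)\tau}\le e^{-\theta\tau}$; summing over $\mu$ gives $\norm{\dot{\mathcal{L}}_N(\tau)}=O(\lambda^2 G_{b,2}e^{-\theta\tau})$. This is precisely the mechanism that splits the error into a Hamiltonian-driven part proportional to $G_{b,1}$ and a memory-kernel part decaying as $e^{-\theta t}$. Collecting the three pieces at $s\ge t$, inserting into $\tfrac{\delta t^2}{2}\sup\norm{F'}$, and tracking constants yields \cref{1step-error}; the simpler \cref{1step-error2} then follows at once from the term-by-term bound $1/\Im(\omega_\mu)\le 1/\theta$, which gives $G_{b,1}\le 4G_{b,2}/\theta$. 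I expect the main obstacle to be the bookkeeping in $\norm{F'(s)}$—isolating the genuine $\lambda^2$ terms from the $O(\lambda^4)$ remainders, and confirming that it is the exponential decay of the \emph{differentiated} kernel (rather than the $1/\Im(\omega_\mu)$ of the integrated kernel) that controls the constant $G_{b,2}$.
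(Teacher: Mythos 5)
Your proposal is correct and follows essentially the same route as the paper's proof: the one-step residual is identified as the right-endpoint rectangle-rule error for $f(\tau)=\mathcal{U}_S(t+\delta t-\tau)\mathcal{L}_N(\tau)\rho_N(\tau)$, bounded by $\tfrac{\delta t^2}{2}\sup_{\xi}\norm{f'(\xi)}$, with the identical splitting of $f'$ into the $H_S$-commutator pieces (yielding $\norm{H_S}G_{b,1}$ via $\int_0^\tau e^{-\Im(\omega_\mu)t_2}\,\mathrm{d}t_2\le 1/\Im(\omega_\mu)$) and the fundamental-theorem-of-calculus term $\dot{\mathcal{L}}_N$ (yielding $G_{b,2}e^{-\theta t}$). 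The only cosmetic difference is your initial splitting into $J$-debris: those terms cancel exactly, since $\delta t\,\mathcal{L}_N(t+\delta t)\mathcal{E}_I\rho(t)+\delta t\,\mathcal{L}_N(t+\delta t)J=\delta t\,\mathcal{L}_N(t+\delta t)\rho_N(t+\delta t)=\delta t\, f(t+\delta t)$, so no $O(\lambda^4)$ absorption is needed there — the paper writes the residual in this exact quadrature form from the outset.
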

\begin{proof}
    From \cref{eq:NMNM}, the one-step NMNM recovers the quantum state to
    \begin{equation}
         \mathcal{E}_Q(t,t+\delta t)\mathcal{E}_N(t+\delta t, t)\rho(t) = (I-\delta t\mathcal{L}_N(t+\delta t))\mathcal{E}_N(t+\delta t, t)\rho(t).
    \end{equation}
    
    Consequently, the difference between the error-mitigated density operator and the ideal density operator becomes
    \begin{equation}
    \begin{aligned}
        &\bl{\mathcal{E}_Q(t,t+\delta t)}\mathcal{E}_N(t+\delta t, t)\rho(t) - \cE_I(\delta t)\rho(t)= -\delta t \mathcal{L}_N(t+\delta t)\rho_N(t+\delta t) + \int_t^{t+\delta t}\mathcal{U}_S(t+\delta t-\tau)\mathcal{L}_N(\tau)\rho_N(\tau)\mathrm{d}\tau \\       
    \end{aligned}
    \end{equation}

The operator here is expressed as an integral. We will use the simple rectangle rule to approximate it and provide an error bound. Toward this end,  let the function
\[ f(\tau)  = \mathcal{U}_S(t+\delta t-\tau)\mathcal{L}_N(\tau)\rho_N(\tau), \] 
with its derivative given by,
    \begin{equation}
    f'(\tau) = i[H_S,f(\tau)]+\mathcal{U}_S(t+\delta t-\tau)\left(\lambda^2\sum_\mu \mathcal{L}(\widehat{T}_\mu, \widehat{T}_\mu(-\tau)e^{i\omega_\mu \tau})\rho_N(\tau)-i\mathcal{L}_N(\tau)[H_S,\rho_N(\tau)]\right).
    \end{equation}
    
    The error of the quadrature depends on the derivative bound. Using \cref{That,LFG} and
    $\theta = \inf_\mu\Im(\omega_\mu)$,  we have
    \begin{equation}
    \begin{aligned}
    \|f'(\xi)\|&\leq \lambda^2\left(2\|H_S\|\sum_\mu \|\widehat{T}_\mu\|^2\int_0^{t+\delta t} e^{-\Im(\omega_\mu) t_2}\mathrm{d}t_2+\sum_\mu\|\widehat{T}_\mu\|^2 e^{-\Im(\omega_\mu) t}\right)\\
    &\leq 4\lambda^2 \|H_S\| \sum_\mu\sum_{j,k}|g_{j,\mu}||g_{k,\mu}|\frac{1}{\Im(\omega_\mu)} + \lambda^2 \sum_\mu\sum_{j,k}|g_{j,\mu}||g_{k,\mu}|e^{-\theta t}.
    \end{aligned}
    \end{equation}
    Then by using quadrature formula, we arrive at,
    \begin{equation}
        \|\bl{\mathcal{E}_Q(t,t+\delta t)}\mathcal{E}_N(t+\delta t, t)\rho(t) - \cE_I(\delta t)\rho(t)\| \leq \frac{1}{2}\delta t^2 \max_{\xi\in[t,t+\delta t]}|f'(\xi)|\leq \delta t^2\lambda^2 (\|H_S\|G_{b,1} + G_{b,2}  e^{-\theta t}). 
    \end{equation}
\end{proof}
\begin{theorem}\label{thm:bias}
{Let $\rho_Q(T)$ be the expectation of the density operator from the stochastic QEM in \cref{eq:nstepQEM} at the end time $T = M\delta t$. } It produces an approximation of the the ideal density operator $\rho_I(T)$ with the following error bound,
    \begin{equation}\label{rho-bound}
        \norm{\rho_I(T) - \rho_Q(T)} \leq \delta t T\lambda^2 \|H_S\|G_{b,1}+\delta t^2\lambda^2 G_{b,2}\frac{1}{1-e^{-\theta \delta t}},
    \end{equation}
    with the same coefficients defined in \cref{Gb12}. 
\end{theorem}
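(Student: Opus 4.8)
The plan is to accumulate the single-step estimate of \cref{lm: 1stepQEM} across all $M$ steps through a telescoping (hybrid) argument. First I would identify the two ordered products being compared. By the quasiprobability decomposition \cref{eq:PEC}, the signed expectation appearing in \cref{eq:nstepQEM} reconstructs the deterministic composition of the recovery and noise channels at each layer, so that
\[
\rho_Q(T) = \prod_{k=0}^{M-1} \mathcal{E}_Q(k\delta t,(k+1)\delta t)\,\mathcal{E}_N((k+1)\delta t, k\delta t)\,\rho(0), \qquad \rho_I(T) = \prod_{k=0}^{M-1}\mathcal{E}_I((k+1)\delta t, k\delta t)\,\rho(0).
\]
Writing $\Phi_k := \mathcal{E}_Q(k\delta t,(k+1)\delta t)\mathcal{E}_N((k+1)\delta t,k\delta t)$ and $\Psi_k := \mathcal{E}_I((k+1)\delta t,k\delta t)$, I would telescope the difference of the two products into the standard sum
\[
\prod_k \Phi_k - \prod_k \Psi_k = \sum_{j=0}^{M-1}\Bigl(\prod_{k>j}\Psi_k\Bigr)(\Phi_j - \Psi_j)\Bigl(\prod_{k<j}\Phi_k\Bigr),
\]
each term of which isolates the defect of exactly one step.

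Next I would estimate each summand in the trace norm. The left-hand factors $\prod_{k>j}\Psi_k$ are compositions of ideal unitary channels $\rho\mapsto U_S\rho U_S^\dagger$, and since the trace norm is unitarily invariant they preserve it exactly; hence these factors contribute $1$ and may be dropped. What remains is the one-step defect $(\Phi_j-\Psi_j)$ applied to the intermediate state $\sigma_j := \prod_{k<j}\Phi_k\,\rho(0)$, which is precisely the quantity bounded in \cref{lm: 1stepQEM} at time $t = j\delta t$, giving
\[
\bigl\|(\Phi_j-\Psi_j)\sigma_j\bigr\| \le \delta t^2\lambda^2\bigl(\|H_S\|G_{b,1} + G_{b,2}\,e^{-\theta j\delta t}\bigr).
\]
Summing over $j$ and using $M\delta t = T$ together with the geometric bound $\sum_{j=0}^{M-1}e^{-\theta j\delta t}\le (1-e^{-\theta\delta t})^{-1}$ then reproduces \eqref{rho-bound}: the $\|H_S\|G_{b,1}$ term accumulates linearly into $M\delta t^2 = T\delta t$, while the decaying $G_{b,2}$ term is controlled by the geometric sum.

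The main obstacle I anticipate is making the application of \cref{lm: 1stepQEM} along the trajectory fully rigorous, since the intermediate states $\sigma_j$ are produced by the non-physical recovery maps $\mathcal{E}_Q$ and need not be positive, so a priori $\|\sigma_j\|$ could exceed $1$. I would resolve this by noting that $\mathcal{E}_Q = I - \delta t\,\mathcal{L}_N$ acts through the trace-annihilating dissipator while $\mathcal{E}_N$ is trace-preserving, so $\tr(\sigma_j)=\tr(\rho(0))=1$ holds exactly; moreover each $\Phi_k$ agrees with the trace-norm-preserving ideal channel $\Psi_k$ up to the $O(\delta t^2)$ defect above, so a short induction gives $\|\sigma_j\| = 1 + O(T\delta t)$, which stays bounded by a constant throughout the evolution in the regime of interest. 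Consequently the derivative estimate underlying \cref{lm: 1stepQEM}, which uses only $\|\sigma_j\|\lesssim 1$ together with $\|H_S\|$ and $\|\widehat{T}_\mu\|$, remains valid along the actual (partially mitigated) trajectory, and the telescoped sum yields the stated bias bound at leading order.
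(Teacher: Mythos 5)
Your proposal follows essentially the same route as the paper's proof: the identical telescoping (hybrid) decomposition of $\rho_Q(T)-\rho_I(T)$ into single-step defects, unitary invariance of the trace norm to discard the trailing ideal factors $\mathcal{E}_I(T,k\delta t)$, the one-step bound of \cref{lm: 1stepQEM} applied at $t=j\delta t$ along the mitigated trajectory $\sigma_j=\rho_Q(j\delta t)$, and the geometric sum $\sum_{j}e^{-\theta j\delta t}\le (1-e^{-\theta\delta t})^{-1}$ producing the two terms of \eqref{rho-bound}. Your closing discussion of the intermediate states being potentially non-physical---and the induction giving $\|\sigma_j\| = 1+O(T\delta t)$ so that the one-step lemma, which is linear in the input state, still applies at leading order---is a rigor point the paper's proof passes over silently, but it refines rather than changes the argument.
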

\begin{proof} 
The $M$-step QEM error can be upper-bounded by the triangle inequality
\begin{equation}\scriptstyle
    \|\rho_Q(T)-\rho_I(T)\|\leq \sum_{k=1}^M\|\mathcal{E}_I(T,k\delta t)\mathcal{E}_Q((k-1)\delta t, k\delta t)\mathcal{E}_N(k\delta t, (k-1)\delta t)\rho_Q((k-1)\delta t)-\mathcal{E}_I(T,(k-1)\delta t)\rho_Q((k-1)\delta t)\|.
\end{equation}
By substituting the right-hand side with the 1-step QEM error bound from \cref{lm: 1stepQEM}, we find that, 
\begin{equation}
    \|\rho_Q(T)-\rho_I(T)\|\leq \delta t T\lambda^2 G_{b,1}+\delta t^2\lambda^2 G_{b,2}\frac{1}{1-e^{-\theta \delta t}}.
\end{equation}
\end{proof} 

In light of  \cref{eq: QME}, the error caused by the \teal{Lamb-shift} can be directly simulated by Hamiltonian simulation.  It is the incoherent part that requires the quasi-probabilistic approach. Therefore, focus our analysis on the mitigation of $\mathcal{L}_D$ in \cref{eq:LDLC}. 
\begin{theorem}[Error Mitigation for the {incoherent} part of the non-Markovian noise]\label{thm:gamma}
    The normalization coefficient in~\cref{gamma_tot} of our NMNM method for the non-Markovian {incoherent noise} has the following bound
    \begin{equation}\label{gamma-bound}
       \gamma_\textbf{tot} \leq e^{O(\lambda^2 TG_{b,1})} ,
    \end{equation}
    with the constants defined in \cref{Gb12}.
\end{theorem}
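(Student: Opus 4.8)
The plan is to reduce the product bound in \eqref{gamma-bound} to a uniform per-step estimate and then invoke $1+x\le e^{x}$. Since $\gamma_{\textbf{tot}}=\prod_{k=0}^{M-1}\gamma(k\dt,\dt)$ by \eqref{gamma_tot}, it suffices to show that each single-step factor obeys $\gamma(t,\dt)\le 1+O(\lambda^2 G_{b,1})\,\dt+O(\dt^2)$ uniformly in $t$. The product over the $M=T/\dt$ steps then gives $\gamma_{\textbf{tot}}\le\bigl(1+O(\lambda^2 G_{b,1}\,\dt)\bigr)^{M}\le e^{O(\lambda^2 G_{b,1}T)}$, which is exactly the claimed bound.

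For the per-step estimate, recall from \eqref{eq:normalizationconstant} that $\gamma(t,\dt)=\sum_\ell|q_\ell(t,\dt)|$, where the $q_\ell$ are the quasi-probabilities in the basis expansion of the incoherent recovery operator. I would first Taylor-expand $e^{-\dt\mathcal{L}_D(t)}=I-\dt\,\mathcal{L}_D(t)+O(\dt^2\|\mathcal{L}_D\|^2)$, with $\mathcal{L}_D$ as in \eqref{eq:LDLC}; the coherent factor $e^{-\dt\mathcal{L}_C(t)}$ is unitary and contributes $\gamma=1$, so it plays no role. Writing $\mathcal{L}_D(t)=\sum_\ell d_\ell(t)\,\mathcal{B}_\ell$ and noting that the identity superoperator is the first basis element $\mathcal{B}_1=[I]$ of \cref{BasisTable}, the quasi-probabilities read $q_1=1-\dt\,d_1+O(\dt^2)$ and $q_\ell=-\dt\,d_\ell+O(\dt^2)$ for $\ell\ne1$. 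The triangle inequality then yields $\gamma(t,\dt)\le 1+\dt\sum_\ell|d_\ell(t)|+O(\dt^2)$, so the task reduces to controlling the coefficient $1$-norm $\sum_\ell|d_\ell(t)|$ in the fixed PEC basis.

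To turn this coefficient $1$-norm into an operator-level quantity, I would use the dual frame $\{\mathcal{D}_\ell\}$ of $\{\mathcal{B}_\ell\}$, so that $d_\ell=\langle\mathcal{D}_\ell,\mathcal{L}_D\rangle$ and, by norm equivalence on the fixed finite-dimensional superoperator space, $\sum_\ell|d_\ell|\le C_{\mathrm{b}}\,\|\mathcal{L}_D(t)\|$ for a constant $C_{\mathrm{b}}$ determined solely by the basis. It then remains to bound $\|\mathcal{L}_D(t)\|$. Since $\mathcal{L}_D$ is built from the same jump operators $\widehat{T}_\mu$ from \eqref{That} and the same BCF decay as $\mathcal{L}_N$ in \eqref{bcf2LN}, it inherits, up to a universal constant, the bound on $\|\mathcal{L}_N(t)\|$: estimating each of the four terms of $\mathcal{L}(F,G)$ in \eqref{LFG} by $\|F\|\,\|G\|$, using $\|\widehat{T}_\mu\|\le\sum_j|g_{j,\mu}|$ (recall $\|S_j\|=1$) and $\int_0^t e^{-\Im(\omega_\mu)t_2}\,\mathrm{d}t_2\le 1/\Im(\omega_\mu)$, gives the time-uniform estimate $\|\mathcal{L}_N(t)\|\le 4\lambda^2\sum_\mu\bigl(\sum_j|g_{j,\mu}|\bigr)^2/\Im(\omega_\mu)=2\lambda^2 G_{b,1}$. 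Combining these delivers $\gamma(t,\dt)\le 1+2C_{\mathrm{b}}\lambda^2 G_{b,1}\,\dt+O(\dt^2)$, and the product-to-exponential step finishes the proof.

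I expect the main obstacle to be the third step: pinning down the basis-conversion constant $C_{\mathrm{b}}$ and verifying that the $O(\dt^2)$ remainders in the single-step factors accumulate harmlessly. These remainders contribute a factor of order $e^{O(\lambda^4 G_{b,1}^2 T\dt)}$, which is absorbed into the exponent once $\dt$ is small (consistent with the choice fixed by \cref{thm:bias}). The bound $\|\mathcal{L}_N\|\le 2\lambda^2 G_{b,1}$ and the passage from the product to the exponential are routine; the only genuinely basis-dependent quantity is $C_{\mathrm{b}}$, which is precisely what sets the implicit constant in the exponent of \eqref{gamma-bound}.
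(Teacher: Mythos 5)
Your proposal follows the same skeleton as the paper's proof: expand the one-step recovery operator in the fixed PEC basis, bound the quasi-probability $1$-norm by $\gamma(t,\delta t)\le 1+O(\lambda^2 G_{b,1})\,\delta t$, and convert the product over $M=T/\delta t$ steps into the exponential bound via $1+x\le e^x$. Your estimate $\norm{\mathcal{L}_N(t)}\le 2\lambda^2 G_{b,1}$ (four terms of $\mathcal{L}(F,G)$ in \cref{LFG}, $\norm{\widehat T_\mu}\le\sum_j|g_{j,\mu}|$, and $\int_0^t e^{-\Im(\omega_\mu)t_2}\mathrm{d}t_2\le 1/\Im(\omega_\mu)$) is exactly the computation the paper runs, there phrased as $\int_0^t\norm{C(\tau)}_1\mathrm{d}\tau\le G_{b,1}/2$, and your basis constant $C_{\mathrm{b}}$ plays precisely the role of the paper's constant $Q$ bounding $\sum_{\ell}\max_{j,k}|D_{j,k,\ell}(\tau)|$. (A minor difference: the paper's recovery operator is the linear map $I-\delta t\,\mathcal{L}_N$ of \cref{eq:NMNM}, not $e^{-\delta t\mathcal{L}_D}$, so the $O(\delta t^2)$ remainders you track never arise in its proof; your handling of them is nonetheless correct.)

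The one link that is not justified as written is the claim that $\mathcal{L}_D$ ``inherits, up to a universal constant, the bound on $\norm{\mathcal{L}_N(t)}$.'' The incoherent part is defined through the Hermitian/skew-Hermitian split $A=\Gamma+i\Xi$ of the coefficient matrix, and a summand cannot in general be bounded by the sum: $\mathcal{L}_C$ and $\mathcal{L}_D$ can cancel against each other, so $\norm{\mathcal{L}_D}$ may exceed $\norm{\mathcal{L}_N}$. Worse, when the basis $\{V_\alpha\}$ contains the identity (as a full orthogonal basis of $\mathbb{C}^{N\times N}$ does), the coefficient matrix $A$ is not determined by the superoperator $\mathcal{L}_N$ alone---identity-direction terms can be shifted between the coherent and incoherent parts---so an inequality $\norm{\mathcal{L}_D}\le C\norm{\mathcal{L}_N}$ with a basis-only constant is not even well-posed without fixing the representation. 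The repair is cheap and uses only ingredients you already have: bound the coefficients rather than the superoperator. From \cref{That} and the decay of the BCF, $|A_{\alpha\beta}(t)|\le\sum_\mu|f_\alpha^\mu|\int_0^t|v_\beta^\mu(\tau)|\mathrm{d}\tau\le\sum_\mu\norm{\widehat T_\mu}^2/\Im(\omega_\mu)\le G_{b,1}/2$, hence $|\Gamma_{\alpha\beta}(t)|\le\tfrac12\bigl(|A_{\alpha\beta}(t)|+|A_{\beta\alpha}(t)|\bigr)\le G_{b,1}/2$ uniformly in $t$; inserting this element-wise bound into \cref{eq:LDLC} yields $\sum_\ell|d_\ell(t)|\le C'\lambda^2 G_{b,1}$ with $C'$ depending only on the basis, which is all your per-step estimate needs. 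This coefficient-level argument is essentially what the paper does, written in terms of the bath correlation functions via \cref{gamma2c}.
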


Remarkably, the parameter $G_{b,1}$ appeared in the complexity bound as well. 

\begin{proof}
We consider the QEM scheme for the {incoherent} part of the non-Markovian noise, i.e.
\begin{equation}
   \mathcal{E}_Q(t-\delta t, t)\rho(t) = \rho(t)-\delta t\lambda^2 \sum_{\alpha,\beta} \Gamma_{\alpha\beta}(t)(2V_\alpha\rho(t) V_\beta^\dag -\rho(t) V_\beta^\dag V_\alpha - V_\beta^\dag V_\alpha \rho(t)),\quad \Gamma_{\alpha,\beta} = \frac{1}{2}(A_{\alpha,\beta}+A_{\beta,\alpha}^*),
\end{equation}
where
\begin{equation}
    2V_\alpha\rho(t) V_\beta^\dag -\rho(t) V_\beta^\dag V_\alpha - V_\beta^\dag V_\alpha \rho(t) = \sum_\ell d_{\ell,\alpha,\beta}\mathcal{B}_\ell \rho(t), \quad d_{\ell,\beta,\alpha} = d_{\ell,\alpha,\beta}^*.
\end{equation}
A direct substitution yields,
\begin{equation}
    \mathcal{E}_Q\teal{(t-\delta t,t)}\rho(t) =(1-\delta t\lambda^2 \sum_{\alpha,\beta}\Gamma_{\alpha\beta}(t)d_{0,\alpha,\beta})\rho(t)+\delta t\lambda^2 \sum_{\ell>0}\sum_{\alpha,\beta}\Gamma_{\alpha\beta}(t)d_{\ell,\alpha,\beta}\mathcal{B}_\ell\rho(t),
\end{equation}
and the normalization coefficient from \cref{eq:normalizationconstant} is thus given by,
\begin{equation} \label{c2gamma}
    \gamma(t,\delta t) := 1-\delta t\lambda^2 \sum_{\alpha,\beta}\Gamma_{\alpha\beta}(t)d_{0,\alpha,\beta} + \delta t\lambda^2\sum_{\ell>0}|\sum_{\alpha,\beta}\Gamma_{\alpha\beta}(t)d_{\ell,\alpha,\beta}|.
\end{equation}
Notice that for any $\ell\geq 0 $,
    \begin{equation}\label{gamma2c}
        \sum_{\alpha,\beta}\Gamma_{\alpha,\beta}d_{\ell,\alpha,\beta} = \sum_{j,k}\int_0^tC_{j,k}^*(\tau)D_{j,k,\ell}^*(\tau) + C_{k,j}(\tau)D_{k,j,\ell}(\tau)\mathrm{d}\tau,
    \end{equation}
    where
    \begin{equation}
        D_{j,k,\ell}(\tau)  = \sum_{\alpha,\beta} \tr(S_jV_\alpha) \tr(S_k(-\tau)V_\beta)d_{\ell,\beta,\alpha}.
    \end{equation}
    By defining a new matrix 
    $G^{(\ell)} = (G_{j,k}^{(\ell)})$ as the Hermitian part of the matrix $ (C_{j,k}D_{j,k,l})_{j,k}$, then
    \begin{equation}
    \sum_{\alpha,\beta}\Gamma_{\alpha,\beta}d_{\ell,\alpha,\beta}  = \sum_{j,k}\int_0^t G_{j,k}^{(\ell)}\mathrm{d}\tau
    \end{equation}

    This simplified \cref{c2gamma} to
    \begin{equation}
        \gamma(t,\delta t) = |1-\delta t \lambda^2 \sum_{j,k}\int_0^t G_{j,k}^{(0)}(\tau)\mathrm{d}\tau| +\delta t\lambda^2 \sum_{\ell>0} |\sum_{j,k}\int_0^t G_{j,k}^{(\ell)}(\tau)\mathrm{d}\tau|.
    \end{equation}
       Observe that $\norm{S_j(-\tau)} \leq 1$ for any $\tau $, $V_\alpha$ is a fixed basis, and $d_{\ell,\beta,\alpha}$ is also independent of the noise operator.   So there exists a constant $Q$ such that $\sum_{\ell\geq 0}\max_{j,k}\abs{D_{j,k,\ell}(\tau)}\leq Q$. 
       Therefore, we can use the Cauchy-Schwarz inequality to the product of $C$ and $D$, 
    \begin{equation}
        \gamma(t,\delta t) \leq  1 + 2\delta t\lambda^2 Q\int_0^t\|C(\tau)\|_1\mathrm{d}\tau,
    \end{equation}
    where the element-wise 1-norm of BCF $\|C(\tau)\|_1 = \sum_{j,k}|C_{j,k}(\tau)|\leq  \sum_{j,k}\sum_{\mu}|g_{j,\mu}||g_{k,\mu}|e^{-\beta_\mu \tau}$. 
    The normalization coefficient of multi-step quantum error mitigation
    \begin{equation}\label{eq:overhead0}
    \begin{aligned}
            \gamma_\textbf{tot} &=\prod_{k=0}^{M-1}\gamma(k\delta t,\delta t) \leq \prod_{k=0}^{M-1}(1+2\delta t\lambda^2 Q\int_0^{(k+1)\delta t}\|C(\tau)\|_1\mathrm{d}\tau)\leq \exp(2\lambda^2 QT\sum_{j,k,\mu}|g_{j,\mu}||g_{k,\mu}|\frac{1}{\Im(\omega_\mu)})\\
            &\leq \exp(O(\lambda^2 TG_{b,1})).
    \end{aligned}
    \end{equation}
\end{proof}

Combining the previous results with Hoeffding's inequality, we summarize the complexity of NMNM  as follows. 
\begin{theorem}[Sampling Complexity]\label{thm:complexity}
    Given an observable $O$ and accuracy parameters $\epsilon,\delta \in (0,1]$. Set
    \begin{equation}
       \delta t = O\left(\frac{\epsilon}{\lambda^2 (T \|H_S\|G_{b,1} + G_{b,2}\frac{1}{1-e^{-\theta}})}\right),\quad N_r  = \Omega\left(\log(1/\delta )( \exp(\lambda^2 T G_{b,1})/\epsilon^2)\right).
   \end{equation}
   Then, $N_r$ samples of NMNM trajectories with outcomes $o^{(m)}, m = 1,\cdots, N_r$ at time $T = M\delta t$ allow for an accurate approximation of the ideal measurement such that
   \begin{equation}
       \left|\tr(O\rho_I(T))-\frac{1}{N_r} \sum_{m=1}^{N_r}o^{(m)}\right|\leq \epsilon,
   \end{equation}
   with probability at least $1-\delta $.
\end{theorem}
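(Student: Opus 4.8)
The plan is to assemble the theorem from the three ingredients already in hand — the multi-step bias estimate of \cref{thm:bias}, the sampling-overhead bound of \cref{thm:gamma}, and Hoeffding's inequality \cref{eq:propSampling} — by splitting the total error into a deterministic \emph{bias} (approximation) part and a random \emph{statistical} part and forcing each to be at most \(\epsilon/2\). Writing \(\bar o = \tr(O\rho_Q(T))\) for the mean of the estimator, the triangle inequality gives
\[
\Big|\tr(O\rho_I(T)) - \tfrac{1}{N_r}\sum_{m} o^{(m)}\Big| \le \underbrace{\big|\tr\!\big(O(\rho_I(T)-\rho_Q(T))\big)\big|}_{\textrm{bias}} + \underbrace{\Big|\bar o - \tfrac{1}{N_r}\sum_m o^{(m)}\Big|}_{\textrm{statistical}}.
\]
The whole proof is then a matter of controlling these two terms separately and tracking constants; no genuinely new estimate is needed beyond what the earlier results supply.

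For the bias term I would pass from the trace-norm bound of \cref{thm:bias} to the observable via the duality \(|\tr(OX)| \le \|O\|\,\|X\|_1\), so that (normalizing \(\|O\|\le 1\)) the bias is at most the right-hand side of \cref{rho-bound}. The one cosmetic simplification is to replace the factor \(\delta t^2/(1-e^{-\theta\delta t})\) by \(O(\delta t)\): the map \(\delta t \mapsto \delta t/(1-e^{-\theta\delta t})\) increases from \(1/\theta\) to \(\infty\), so for \(\delta t\le 1\) it is bounded by \(1/(1-e^{-\theta})\), turning the second term of \cref{rho-bound} into \(\delta t\,\lambda^2 G_{b,2}/(1-e^{-\theta})\). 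Both bias contributions are thus linear in \(\delta t\), and demanding their sum to be \(\le \epsilon/2\) yields exactly the stated choice \(\delta t = O\big(\epsilon/[\lambda^2(T\|H_S\|G_{b,1}+G_{b,2}/(1-e^{-\theta}))]\big)\).

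For the statistical term I would first note that each outcome \(o^{(m)}\) of the stochastic circuit, carried together with the sign \(\alpha_{\vec\ell}\) and the prefactor \(\gamma_\textbf{tot}\), lies in the interval \([-\gamma_\textbf{tot}\|O\|,\ \gamma_\textbf{tot}\|O\|]\), which is precisely the bounded-range hypothesis behind \cref{eq:propSampling}. Setting the failure bound there to be at most \(\delta\) and solving for \(N_r\) produces \(N_r = \Omega(\gamma_\textbf{tot}^2\log(1/\delta)/\epsilon^2)\); substituting the overhead bound \(\gamma_\textbf{tot} \le e^{O(\lambda^2 T G_{b,1})}\) from \cref{thm:gamma} (the squaring is absorbed into the \(O(\cdot)\) inside the exponent) gives the claimed \(N_r = \Omega\big(\log(1/\delta)\,e^{O(\lambda^2 T G_{b,1})}/\epsilon^2\big)\). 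Combining the \(\le\epsilon/2\) bias with the event, of probability \(\ge 1-\delta\), that the statistical error is \(\le\epsilon/2\) then closes the argument.

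The step I expect to require the most care — rather than being conceptually hard — is the honest bookkeeping of constants in the statistical bound: writing \(o^{(m)}\) in the fully signed, \(\gamma_\textbf{tot}\)-weighted form so that \(\bar o = \tr(O\rho_Q(T))\) exactly and the Hoeffding range is genuinely \(2\gamma_\textbf{tot}\|O\|\), and reconciling whether the exponent should carry \(\gamma_\textbf{tot}\) or \(\gamma_\textbf{tot}^2\). Because the overhead is exponential, this distinction only rescales the implicit constant inside \(\exp(O(\lambda^2 T G_{b,1}))\) and does not affect the stated complexity. The remaining effort is merely to check that the two independently imposed constraints are mutually consistent — in particular that the regime \(\delta t\le 1\) used to simplify the bias is compatible with \(T=M\delta t\) and the chosen \(\delta t\) — after which the two halves glue together into the advertised \((\epsilon,\delta)\) guarantee.
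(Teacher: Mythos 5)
Your proposal is correct and follows essentially the same route as the paper's proof: split the error into a deterministic bias and a statistical part via the triangle inequality, bound the bias by \cref{thm:bias} together with the chord inequality $1-e^{-\theta\delta t}\ge \delta t(1-e^{-\theta})$ for $\delta t\le 1$, and bound the statistical part by substituting the overhead estimate of \cref{thm:gamma} into Hoeffding's inequality \cref{eq:propSampling}. If anything, your bookkeeping is slightly more careful than the paper's (which states the Hoeffding exponent with $\gamma_\textbf{tot}$ rather than $\gamma_\textbf{tot}^2$ and phrases the deterministic bias probabilistically), but as you note these differences are absorbed into the constants inside $\exp(O(\lambda^2 T G_{b,1}))$ and do not change the stated complexity.
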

\begin{proof}
    Notice that the average value $\Bar{o} = \tr(O\rho_Q(T)) =\mathbb{E}_m[o^{(m)}]$. By triangle inequality 
    \begin{equation}
        \mathbb{P}\left( \abs{\frac{1}{N_r} \sum_{m=1}^{N_r}o^{(m)} - \tr(O\rho_I(T)) }>\epsilon \right)\leq \mathbb{P}\left( \abs{\frac{1}{N_r} \sum_{m=1}^{N_r}o^{(m)} - \Bar{o} }>\epsilon/2 \right) + \mathbb{P}\left( \abs{ \Bar{o} - \tr(O\rho_I(T))}>\epsilon/2 \right).
    \end{equation}
    Substitute the result from \cref{thm:gamma} into \cref{eq:propSampling}, we have
    \begin{equation}
        \mathbb{P}\left( \abs{\frac{1}{N_r} \sum_{m=1}^{N_r}o^{(m)} - \Bar{o} }>\epsilon/2 \right)\leq \exp(-\frac{N_r\epsilon^2}{4\exp(O(\lambda^2 T G_{b,1}))}).
    \end{equation}
    From the result in \cref{thm:bias} and the inequality for the exponential function $e^{-\theta \delta t}\leq 1-\delta t(1-e^{-\theta})$, the probability of the bias is bounded by,
    \begin{equation}
         \mathbb{P}\left( \abs{ \Bar{o} - \tr(O\rho_I(T))}>\epsilon/2 \right) \leq 1-\mathbb{P}\left(\delta t\lambda^2 (T\|H_S\| G_{b,1} + G_{b,2}\frac{1}{1-e^{-\theta}})\leq \epsilon/2\right).
    \end{equation}
    
    Consequently, to obtain the ideal measurement with $\epsilon$-precision, the time step $\delta t$ and the number of samples are of the order stated in the theorem.
\end{proof}

\section{Numerical Experiments}\label{NumericalExperiment}
We consider a prototypical model of non-Markovian noise, i.e., the spin-boson model with 1 or 2 spins (qubits) coupled to a common bath. 
Although the separation of incoherent and coherent parts in \cref{AnalysisCost} is insightful, in our numerical implementation for non-Markovian noise error mitigation, it is not required. We choose the Pauli basis as the orthogonal basis $\{V_\alpha\}_\alpha$. 
To obtain the quasi-probability distribution in~\cref{eq:normalizationconstant}, it is sufficient to project the superoperator in~\cref{eq:ErecoveryA}, i.e.
\begin{equation}
    \begin{aligned}
        V_\alpha \rho V_\beta^\dag -\rho V_\beta^\dag V_\alpha  = \sum_{\ell=0}^{M-1} u_{\ell,\alpha,\beta}^1\mathcal{B}_{\ell}\rho,\quad
        V_\alpha \rho V_\beta^\dag- V_\beta^\dag V_\alpha\rho  = \sum_{l=0}^{M-1} u_{\ell,\beta ,\alpha}^{1*}\mathcal{B}_{\ell}\rho\\ 
    \end{aligned}
\end{equation}
then we can apply each step of the recovery operator through the basis 
\begin{equation}\label{E-Q}
     \mathcal{E}_Q(t, t+\delta t) \rho = \sum_\ell q_\ell(t)\mathcal{B}_\ell\rho. 
\end{equation}
In particular, the coefficients are given by, 
\begin{equation}\label{eq:quasi-prob}
    q_\ell(t) = \begin{cases}
         1-\delta t\lambda^2\sum_{\alpha,\beta}F_{\alpha,\beta}^{(0)}(t)& \ell = 0\\
         -\delta t\lambda^2\sum_{\alpha,\beta}F_{\alpha,\beta}^{(\ell)}(t) & \ell > 0.\\
    \end{cases}
\end{equation}
Here $F_{\alpha,\beta}^{(\ell)}(t) = A_{\alpha,\beta}(t)u_{\ell,\alpha,\beta}^1 +{A}^*_{\beta, \alpha}(t)u_{\ell,\beta ,\alpha}^{1*}, \ell\geq 0 $. 

In order to test the effectiveness of a mitigation method on a quantum simulator, one must first simulate the non-Markovian noise channel  $\cE_N$. This is not straightforward because many non-Markovian models \cite{tanimura2020numerically,li2021markovian} require additional degrees of freedom to capture the memory effect accurately. To avoid using extra qubits and to correctly account for noise trajectories, we simulate the noisy channel via a non-Markovian unraveling method.  Specifically, the noisy evolution $\cE_N$ in \cref{eq: nM} is simulated by solving the non-Markovian stochastic Schr\"odinger equation (NMSSE)~\cite{gaspard1999non,biele2012stochastic}. 
\begin{equation}\label{eq:NMSSE}
    \partial_t\psi  = -i{H}_S\psi -\lambda^2\int_0^t \sum_{j,k}C_{j,k}(\tau) S_j e^{-i{H}_S\tau}S_k\psi(t-\tau)\mathrm{d}\tau +\lambda\sum_j\eta_j(t) S_j\psi(t).
\end{equation}
Here $\eta_j(t)$ is a Gaussian noise with mean zero and covariance given by 
the BCF $C_{j,k}.$ Given the BCF, $\eta_j(t)$  can be sampled using fast Fourier transform. Meanwhile, the integral term can be directly computed by a standard quadrature formula.

\begin{algorithm}
\caption{Quantum Error Mitigation for non-Markovian Noise}\label{psudocode}
\begin{algorithmic}[1]
\BState $\ket{\psi} = \ket{\psi_0}$, $\text{coefficient} = 1$, sampled index $\Vec{\ell} = (\ell_k)$
\For{ $k = 0:n-1$}
\State Noisy Evolution $\cE_N$: $\ket{\psi_{k+1}}\gets \ket{\psi_0}, \cdots \ket{\psi_k}$ evolves from NMSSE~\cref{eq:NMSSE} at $[k\delta t, (k+1)\delta t]$
\State Error Mitigation Operator: $\ket{\psi_{k+1}} \gets B_{\ell_k}\ket{\psi_{k+1}}$, apply projective measurement if necessary
\State Save Coefficients: $\text{coefficient} \gets  \text{coefficient}\times \gamma( k\delta t, \delta t)\times \alpha_{\ell_k}(k\delta t, \delta t)$ 
\State Measurements: $\langle O\rangle  = \bra{\psi_{k+1}} O\ket{\psi_{k+1}} \times \text{coefficient}$
\BState \textbf{end}
\EndFor
\end{algorithmic}
\end{algorithm}

The coefficients $u_{\ell,\alpha,\beta}^1$ are obtained by solving the linear systems constructed by the Pauli transfer matrix as in \cite[Appendix D]{endo2018practical} and \cite{greenbaum2015introduction}. Furthermore, the recovery operator $\mathcal{E}_Q$ is implemented $B_{\ell_k}\ket{\psi}$ with sampled index ${\ell_k}$ in terms of the probability distribution in \cref{eq:QEM_measure}. The algorithm of NMNM is summarized in the pseudocode in \cref{psudocode} with a workflow illustrated in \cref{fig:TrajectoryQEM}. Notice that the coefficient at each layer should be saved by the $\text{coefficient}$ variable to obtain the measurements in \cref{eq:QEM_measure}.
Additionally, the last six superoperators $\mathcal{B}_\ell$ in ~\cref{BasisTable} are not unitary and are usually implemented by splitting the corresponding matrix $B_\ell = U_{\ell,1}\pi U_{\ell,2}$ where $\pi$ projects $\ket{\psi}$ to $\ket{0}$ so that the coefficients should also include $|\braket{\psi}{0}|^2$ to ensure the norm one property of the wave function.

\subsection{Single qubit coupled to a Ohmic bath}
We consider the spin-boson model in \cite{gaspard1999non}. In the total Hamiltonian \eqref{Htot},  we choose the system operators as follows, 
\begin{equation}
    H_S = -\frac{\Delta}{2}\sigma_z, \quad S = \sigma_x,
\end{equation}
where $\Delta$ is known as the splitting energy;  $\sigma_x$ and $ \sigma_z$ are Pauli matrices. The initial state is
\begin{equation}
    \ket{\psi} = \begin{bmatrix}
        \frac{\sqrt{3}}{2}e^{-i\pi/4}\\
    \frac{1}{2}e^{i\pi/4}
    \end{bmatrix}.
\end{equation}
The spectral density of the common bath is of the exponential cutoff form
\begin{equation}\label{expcut}
    J(\omega) = \frac{\omega^3}{\omega_c^2}\exp(-\omega/\omega_c),
\end{equation}
with cut-off parameter $\omega_c$. We utilized the 7-pole approximation \red{with $\omega_c = 1$} from \cite[Table 3]{li2021markovian} so that the BCF can be accurately fitted to,
\begin{equation}
\label{eq:BCF}
    C(t) = \int_{0}^{\infty} J(\omega) e^{-i\omega t}\mathrm{d}\omega  = \sum_\mu |q_\mu|^2 e^{i\omega_\mu t}, \quad \omega_\mu \in\mathbb{C}.
\end{equation}

Since \cref{eq: M2} approximates the non-Markovian dynamics up to $O(\lambda^4 t^4)$, we can consider both strong coupling over short time intervals and weak coupling over long time intervals. For each case, we numerically evaluate the expectation with respect to Pauli observables $O_x, O_y, O_z$. 

In the strong coupling regime $\lambda^2 =0.81$ over the time interval $[0,1]$, our numerical results are summarized in \cref{fig:1spinOx,fig:1spinOy,fig:1spinOz}. The average measurements are represented by solid lines, and the shaded areas denote the standard deviation for both the noisy and NMNM trajectories. In particular, we see that the multiple-step NMNM (blue line) provides a better approximation of the ideal measurement (black line) compared with the noisy measurement(red line), even when both are generated using the same number of samples, $10^4$. 
Our error mitigation scheme used the idea of probabilistic error cancellation, and the quasi-probability causes a variance overhead compared to normal probability. Consequently, the area of the blue shadow is larger than the red shadow.

\begin{figure}[htbp]
    \centering
    \begin{subfigure}[b]{0.3\textwidth}
        \centering
        \includegraphics[width=\textwidth]{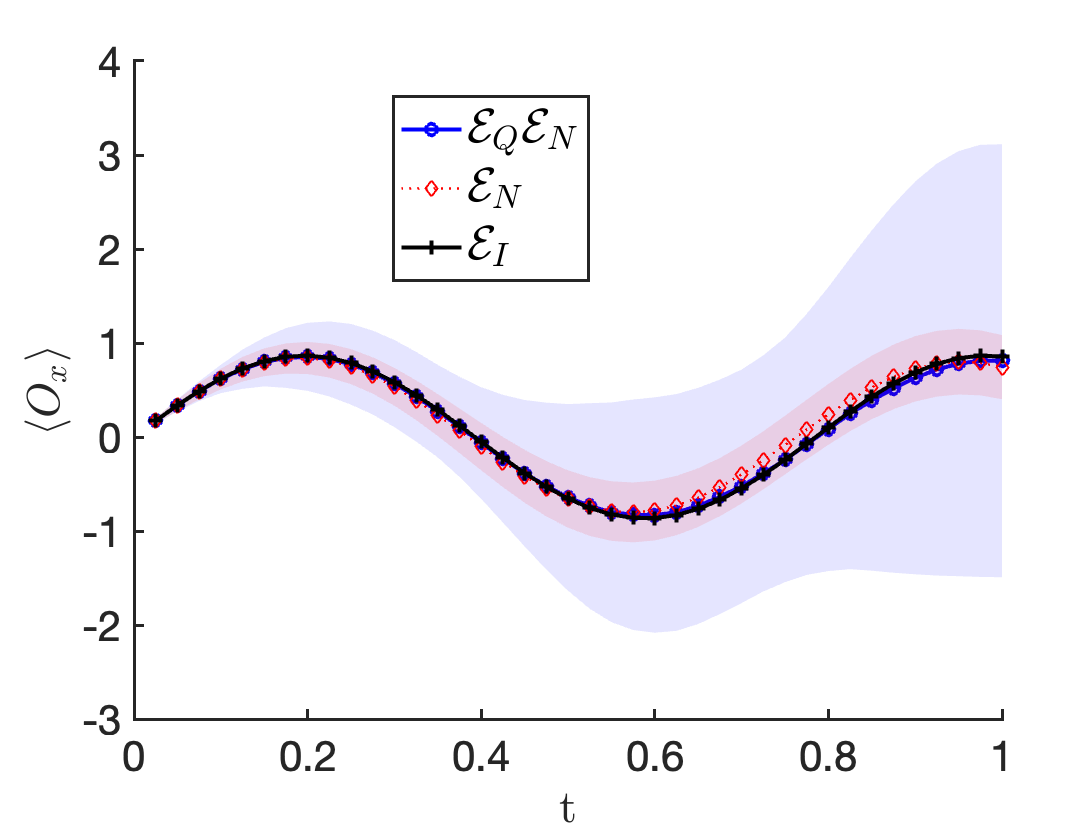}
        \caption{$O_x$}
        \label{fig:1spinOx}
    \end{subfigure}
    \hfill
    \begin{subfigure}[b]{0.3\textwidth}
        \centering
        \includegraphics[width=\textwidth]{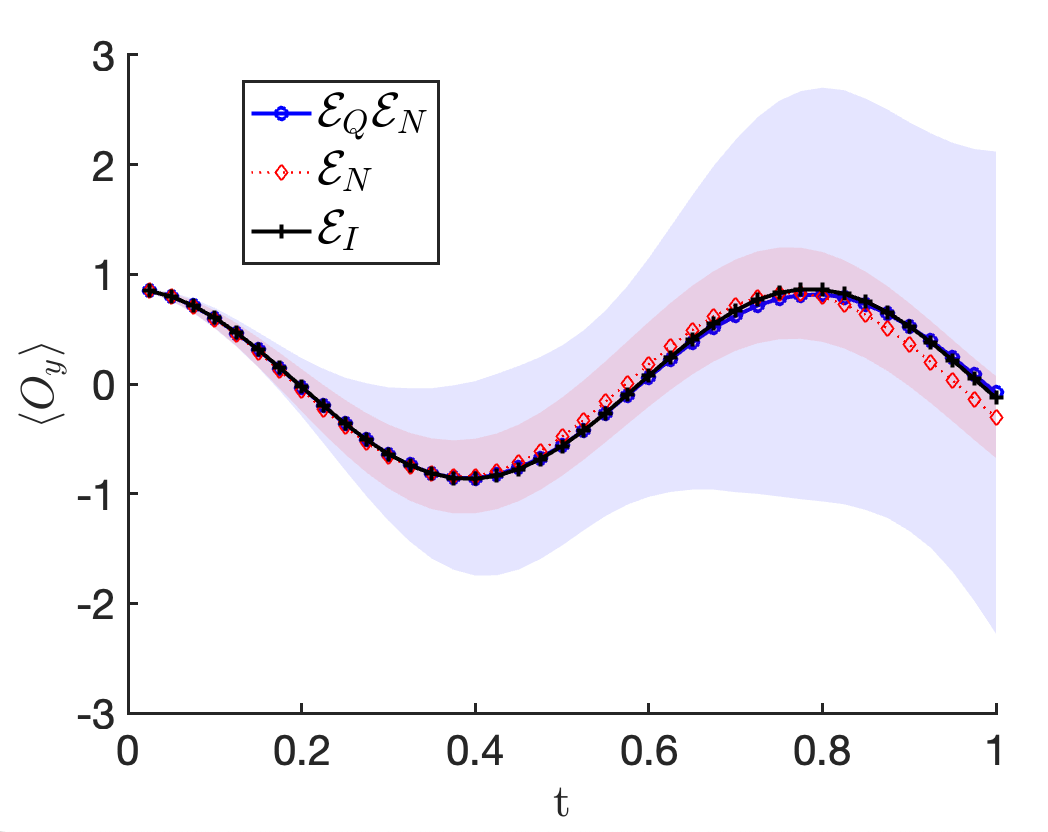}
        \caption{$O_y$}
        \label{fig:1spinOy}
    \end{subfigure}
    \hfill
    \begin{subfigure}[b]{0.3\textwidth}
        \centering
        \includegraphics[width=\textwidth]{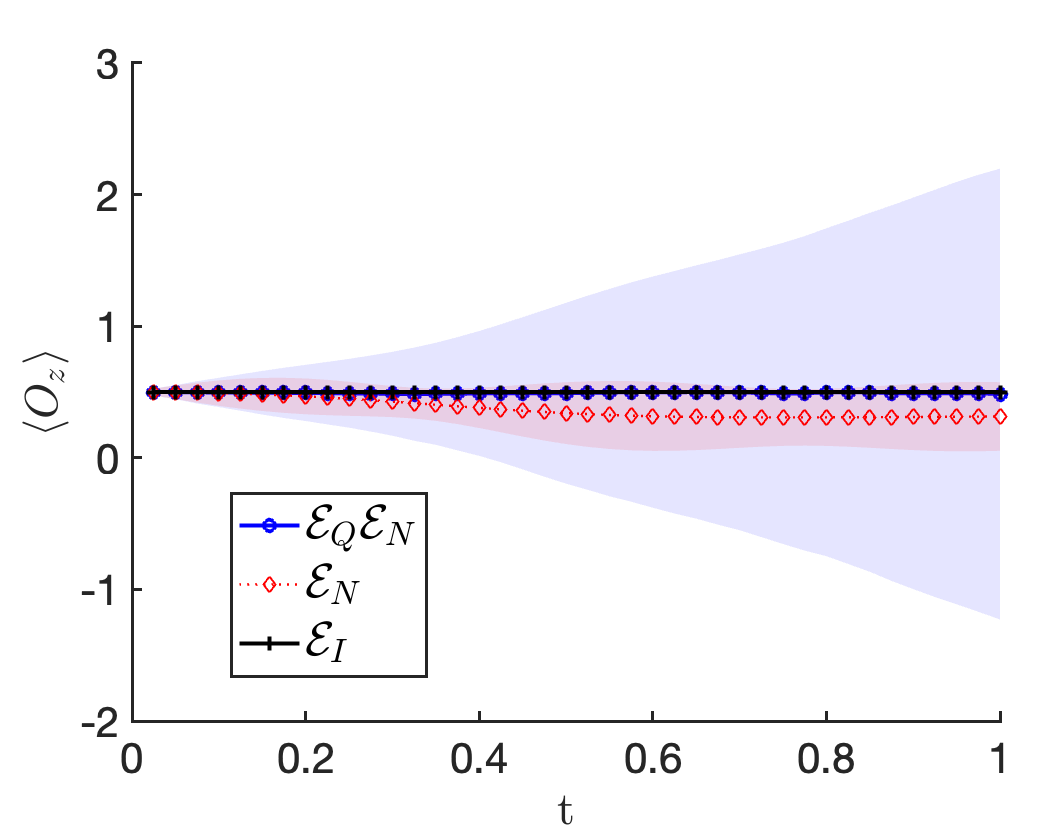}
        \caption{$O_z$}
        \label{fig:1spinOz}
    \end{subfigure}
    \caption{\textbf{Error mitigated time evolution of spin-boson model with 1 qubit.} The expectation of the density operator with respect to Pauli matrices observables $O_x, O_y, O_z$. The coupling strength $\lambda^2 = 0.81$ and the splitting energy $\Delta =8$. Monte-Carlo method is used to sample the measurement of the noisy quantum state $\rho_N(t)$, $\rho_Q(t)$ with number of samples $N_r = 10^4$. The blue and red shaded area represents the standard deviation of the population for the error mitigated trajectory $\mathcal{E}_Q\mathcal{E}_N$ and the noisy trajectories $\mathcal{E}_N$. The step size is set to $\delta t = 0.025$.}
\end{figure}

In the weak coupling regime $\lambda^2 = 0.01$, we consider a longer time interval $[0,5]$. Our results, as summarized in \cref{fig:1spinOxWeakCoupling,fig:1spinOyWeakCoupling,fig:1spinOzWeakCoupling}, demonstrate that NMNM suppresses the errors, especially effective for the Pauli-z observable $O_z$. The reduced shaded area, as compared to the strong coupling case, reflects a lower variance from the smaller coupling strength. Moreover, it indicates that although a sampling size of $10^4$ suffices to achieve robust performance in this single-qubit example for both weak and strong coupling scenarios, NMNM is more efficient in the weak coupling regime for larger systems.  
\begin{figure}[htbp]
    \centering
    \begin{subfigure}[b]{0.3\textwidth}
        \centering        \includegraphics[width=\textwidth]{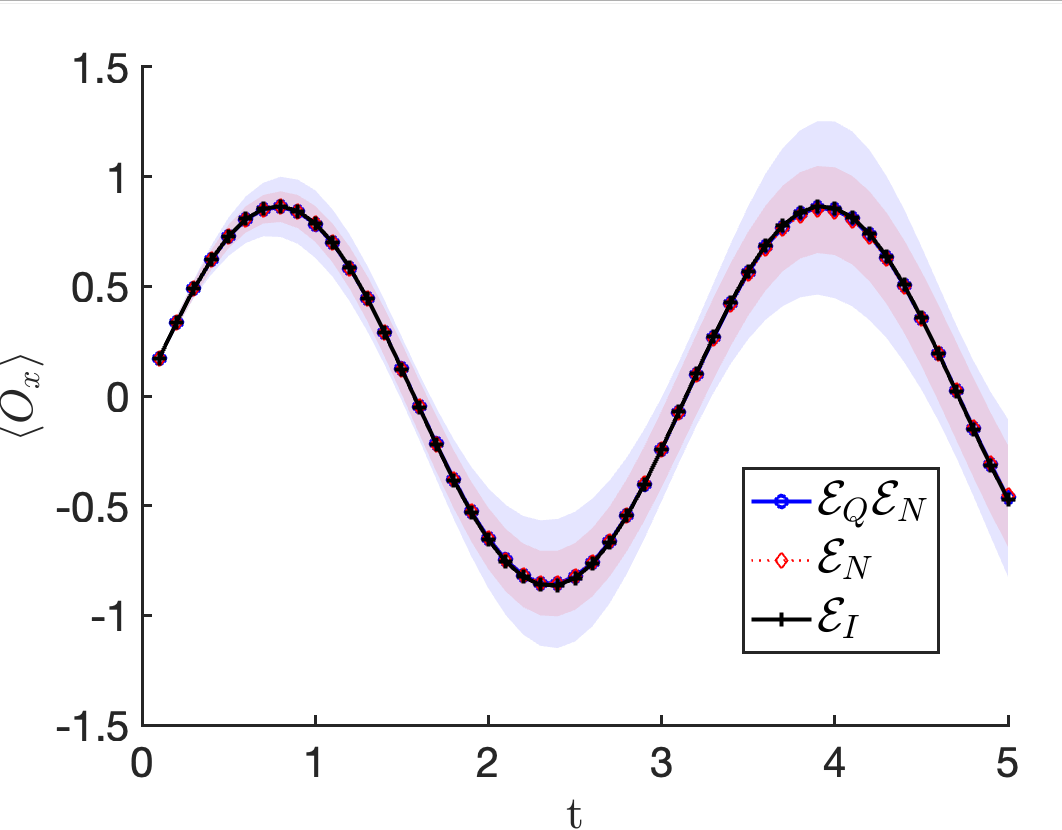}
        \caption{$O_x$}
        \label{fig:1spinOxWeakCoupling}
    \end{subfigure}
    \hfill
    \begin{subfigure}[b]{0.3\textwidth}
        \centering
        \includegraphics[width=\textwidth]{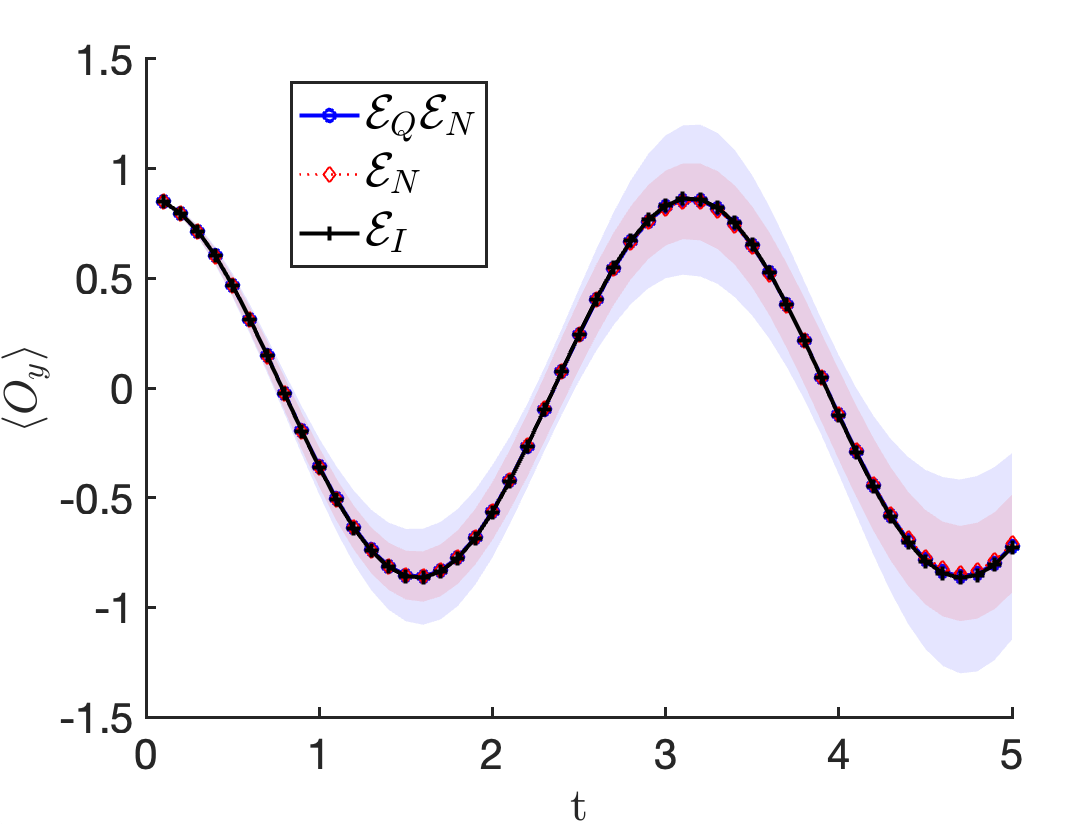}
        \caption{$O_y$}
        \label{fig:1spinOyWeakCoupling}
    \end{subfigure}
    \hfill
    \begin{subfigure}[b]{0.3\textwidth}
        \centering
        \includegraphics[width=\textwidth]{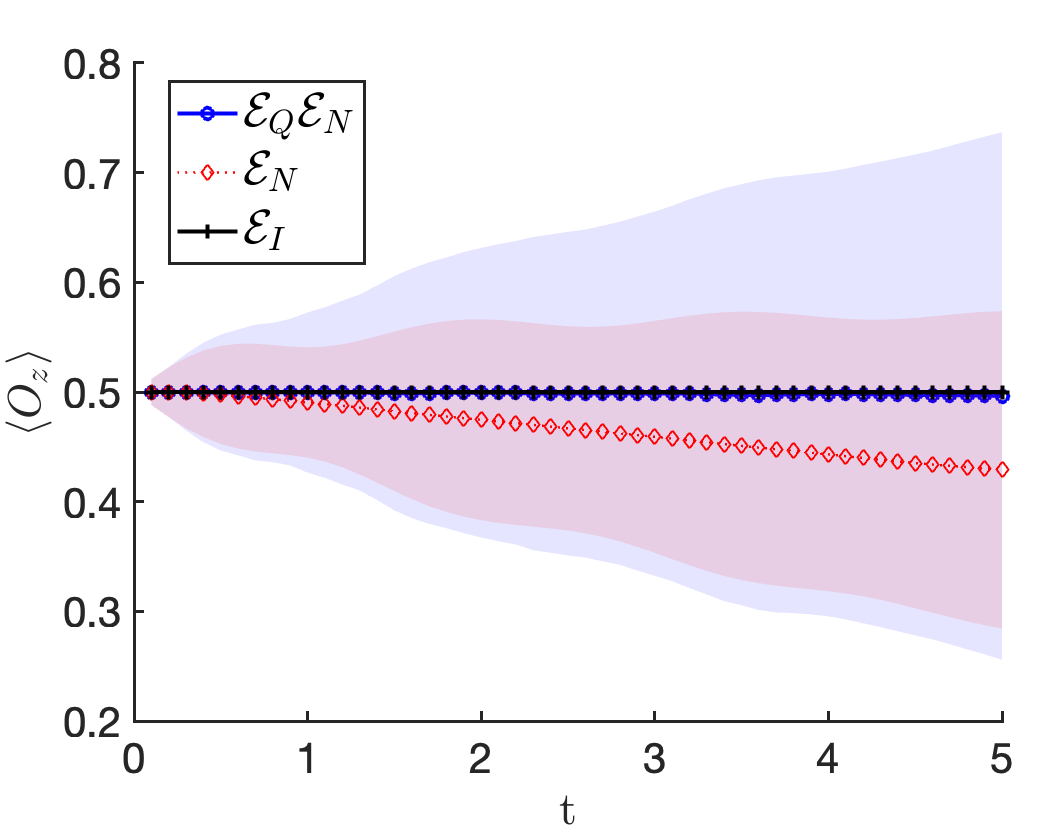}
        \caption{$O_z$}
    \label{fig:1spinOzWeakCoupling}
    \end{subfigure}
    \caption{ \textbf{Error mitigated time evolution of spin-boson model with 1 qubit.} The coupling strength $\lambda^2 = 0.01$ and the splitting energy $\Delta =2$. Monte-Carlo method is used to sample the measurement of the noisy quantum state $\rho_N(t)$ and  $\rho_Q(t)$ with $N_r = 10^4$. The blue and red shaded area represents the standard deviation of the population for the error mitigated trajectory $\mathcal{E}_Q\mathcal{E}_N$ and the noisy trajectories $\mathcal{E}_N$. The step size is set to $\delta t = 0.1$.}
    \label{fig:three_figures_weak}
\end{figure}

 To elucidate how spectral properties of the environment, characterized by $G_{\mathrm{env}}=G_{b,1}$, influence the sampling overhead, we consider an environment having BCFs in \cref{expcut} with larger coefficients $\omega_c$. We compute the resource overhead $\gamma_{\textbf{tot}}$ for various choices of these scalings and present the results in \cref{fig:Genv1}, which is in agreement with the exponential dependence revealed in \cref{thm:gamma}. For the new non-Markovian environment with larger $\teal{G_{\mathrm{env}}}$ from \cref{fig:Genv1}, we then rerun the NMNM algorithm, with results shown in \cref{fig:Genv1spinOzWeakCoupling}. As expected, the larger normalization factor leads to increased variance, compared to the results in \cref{fig:Genv1spinOzWeakCoupling}. \red{Notably, in our numerical test of~\cref{fig:Genv1,fig:Genv1spinOzWeakCoupling}, we keep $\lambda$ fixed and only modify the bath property, allowing us to attribute any observed changes to the effect of $G_{\mathrm{env}}$.  }
\begin{figure}[htbp]
    \centering
    \begin{subfigure}[b]{0.49\textwidth}
        \centering        \includegraphics[width=\textwidth]{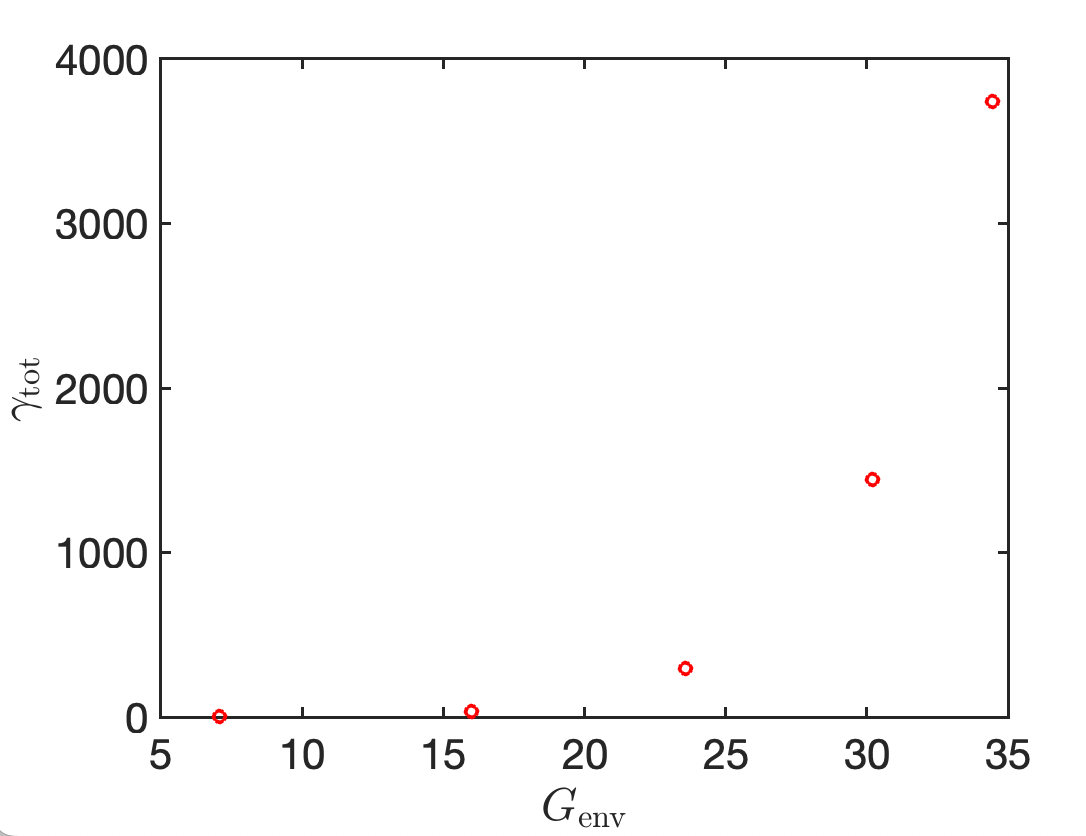}
        \caption{$\teal{G_{\mathrm{env}}}$}
        \label{fig:Genv1}
    \end{subfigure}
    \hfill
    \begin{subfigure}[b]{0.49\textwidth}
        \centering
        \includegraphics[width=\textwidth]{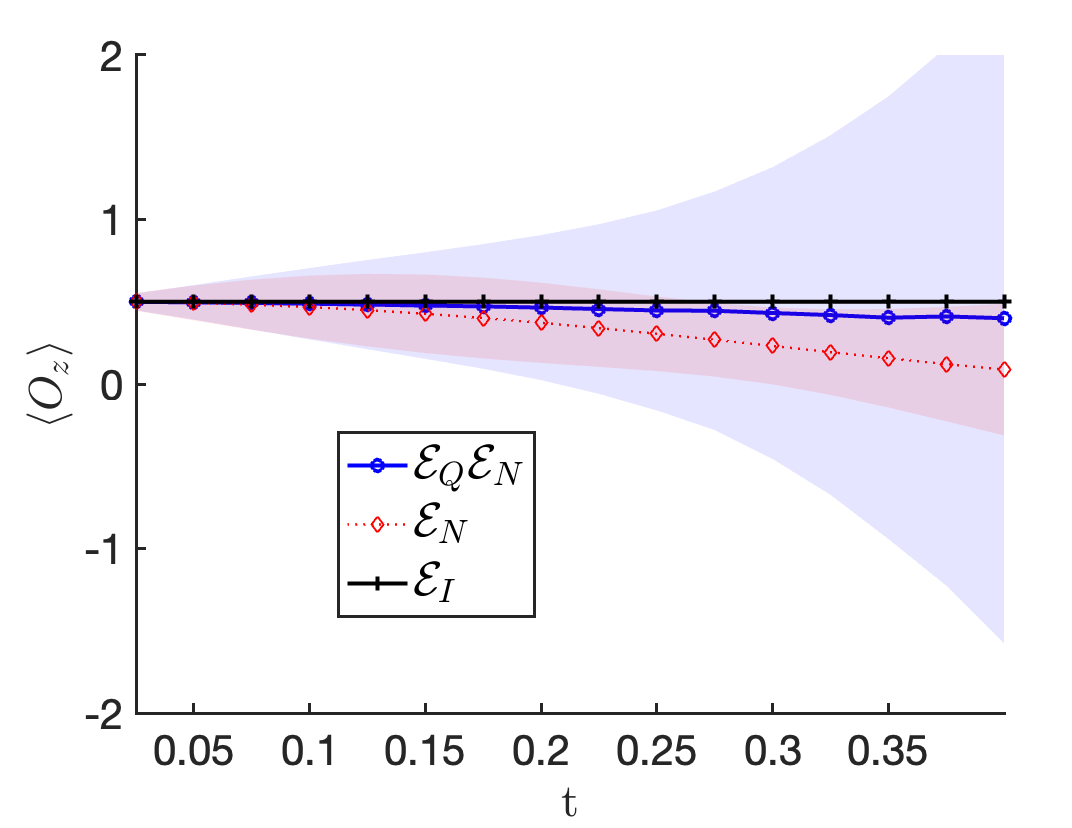}
        \caption{$O_z$}
    \label{fig:Genv1spinOzWeakCoupling}
    \end{subfigure}
    \caption{ {\textbf{Influence of $\teal{G_{\mathrm{env}}}$ on the normalization coefficient and sampling overhead.} The coupling strength $\lambda^2 = 0.81$ and the splitting energy $\Delta =8$. \cref{fig:Genv1}: The normalization factor $\gamma_{\textbf{tot}}$ is computed for environments with different exponential cutoff coefficients $\omega_c = 1, 1.5, 2, 2.5, 3$. \cref{fig:Genv1spinOzWeakCoupling}: Use the larger $\teal{G_{\mathrm{env}}}$ settings from \cref{fig:Genv1} with $\omega_c = 2$, we reapply the noise mitigation algorithm with $N_r = 10^4$. The blue and red shaded regions indicate the standard deviation of the population for the error mitigated trajectory $\mathcal{E}_Q\mathcal{E}_N$ and the noisy trajectory $\mathcal{E}_N$. The step size is set to $\delta t = 0.025$.}}
    \label{fig:Genv_gamma}
\end{figure}

\subsection{Two qubits coupled to a Ohmic bath}
Here we consider a system of two qubits that are coupled to the same bath with a spectrum with an exponential cutoff spectrum \eqref{expcut}. The total Hamiltonian  is as follows
\begin{equation}\label{eq: sb-2qubits}
    \begin{aligned}
        H_S &= \sum_{a=1,2}\frac{\Delta}{2}\sigma_z^a, \quad 
        H_{SB} = S\sum_{a=1,2;k}(g_k^a b_{a,k}+g_k^{a*}b_{a,k}^\dag),\quad S = \sum_{a = 1,2}\sigma_x^a.
    \end{aligned}
\end{equation} 
 In light of \eqref{expcut}, 
the new jump operators in \cref{That} are $\widehat{T}_\mu = \sqrt{\lambda_\mu \gamma_\mu/2}\sigma_x^\mu, \mu = 1,2$. The initial state is
\begin{equation}
    \ket{\psi} = \begin{bmatrix}
        \frac{\sqrt{3}}{2}e^{-i\pi/4}\\
    \frac{1}{2}e^{i\pi/4}
    \end{bmatrix}\otimes\begin{bmatrix}
        \frac{\sqrt{3}}{2}e^{-i\pi/4}\\
    \frac{1}{2}e^{i\pi/4}
    \end{bmatrix}.
\end{equation}

\begin{figure}[H]
    \centering
    \includegraphics[width=0.5\linewidth]{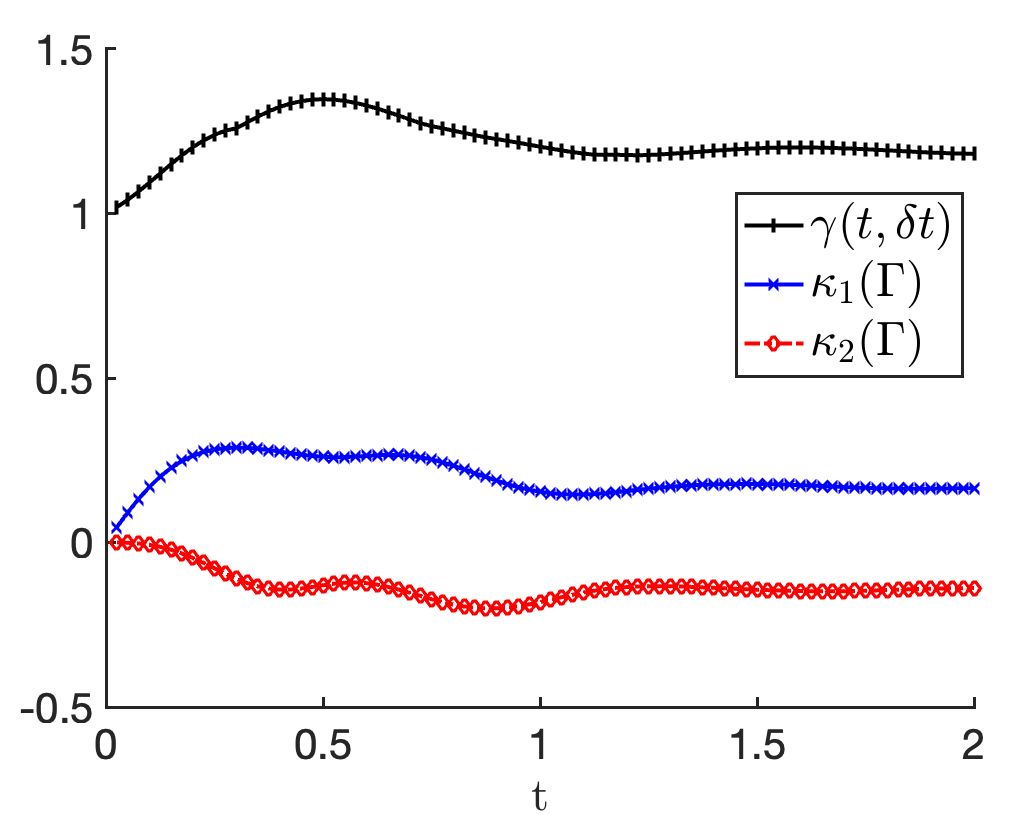}
    \caption{The eigenvalues of the coefficients $(\Gamma_{\alpha\beta})_{\alpha,\beta}$ in the incoherent noise operator $\mathcal{L}_D(t)$ in \cref{eq:LDLC}  for each time step $k\delta t,$ as well as the resource overhead $\gamma(t,\delta t) $ defined in \cref{eq:normalizationconstant}. The coupling parameter $\lambda^2 = 0.81$, the splitting energy $\Delta = 8$ and the time step $\delta t = 0.025$.}
    \label{fig:eigvals}
\end{figure}

To examine the non-Markovian nature of the noise, we computed the two eigenvalues of the coefficients $\Gamma_{\alpha\beta}$ in the incoherent noise operator $\mathcal{L}_D(t)$ in \cref{eq:LDLC}  for each time step $k\delta t$. As shown in \cref{fig:eigvals}, the second eigenvalue becomes negative over time, indicating that the corresponding dynamical map associated with \cref{eq: QME} is not divisible into completely positive maps and thus corresponds to a non-Markovian noise. Also shown in \cref{fig:eigvals} is the one-step resource overhead $\gamma(t,\delta t) $, as defined by each term in \cref{gamma_tot}.

Similar to the single qubit example, we consider both the strong and weak coupling scenarios.  With strong coupling parameter $\lambda^2 = 0.81$,  we run the noisy channel and monitor the observables 
\begin{equation}\label{eq:Oxyz}
    O_x = \frac{1}{2}\sum_{a = 1,2}\sigma_x^a, \quad O_y = \frac{1}{2}\sum_{a = 1,2}\sigma_y^a, \quad O_z = \frac{1}{2}\sum_{a = 1,2}\sigma_z^a.
\end{equation}  
We then insert $\mathcal{B}_\ell$ randomly according to the probability $p_\ell $ from \cref{eq:normalizationconstant} to implement our stochastic QEM method. The results, shown in \cref{fig:2spinOx,fig:2spinOy,fig:2spinOz},   indicate that NMNM can eliminate the non-Markovian error of the two qubits system effectively. Notably, the shaded blue area, which indicates the statistical error, grows quite quickly for large number of time steps $M$ (corresponding to multiple layers in digital quantum computing), due to the large variance. \red{Interestingly, in this regime, the error comes primarily from the statistical error due to sampling, since the averaged result is indistinguishable from the ideal circuit. In this case, the perturbation error from \cref{eq: M2} is almost negligible.  } 

\begin{figure}[!htbp]
    \centering
    \begin{subfigure}[b]{0.3\textwidth}
        \centering        \includegraphics[width=\textwidth]{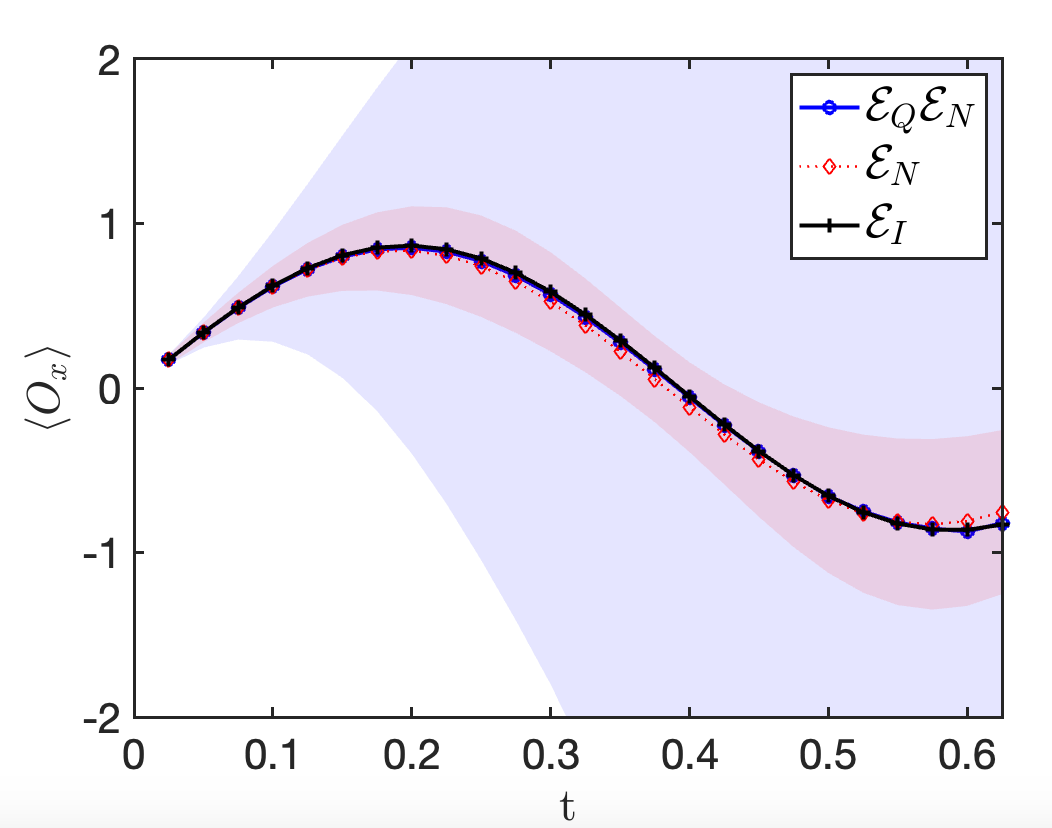}
        \caption{$O_x$}
        \label{fig:2spinOx}
    \end{subfigure}
    \hfill
    \begin{subfigure}[b]{0.3\textwidth}
        \centering
        \includegraphics[width=\textwidth]{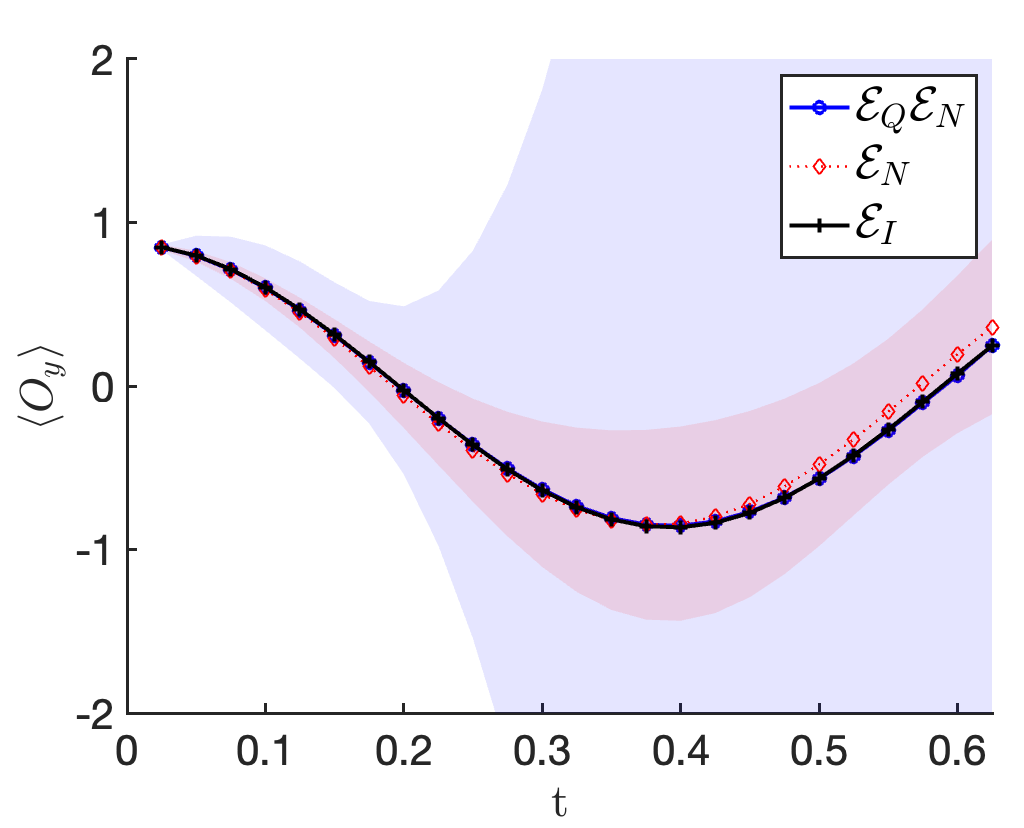}
        \caption{$O_y$}
        \label{fig:2spinOy}
    \end{subfigure}
    \hfill
    \begin{subfigure}[b]{0.3\textwidth}
        \centering
        \includegraphics[width=\textwidth]{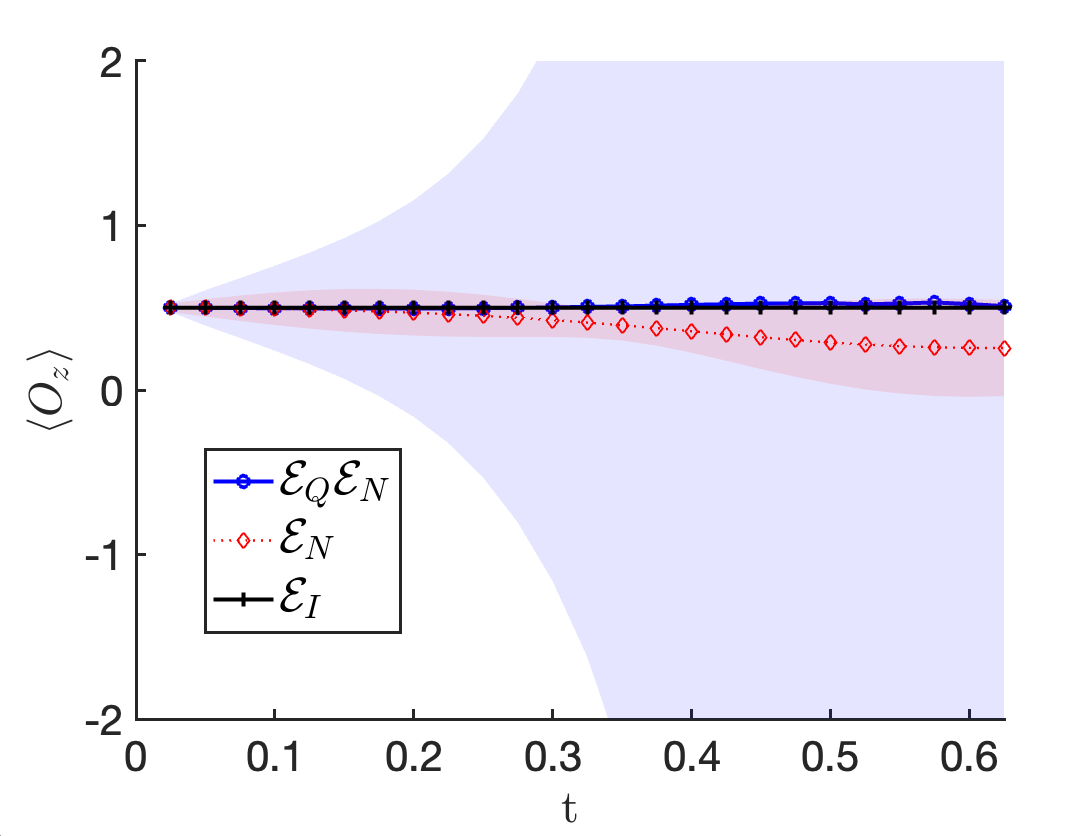}
        \caption{$O_z$}
        \label{fig:2spinOz}
    \end{subfigure}
    \caption{\textbf{Error mitigation of a 2-qubit system coupled to a boson bath \eqref{eq: sb-2qubits} with coupling strength $\lambda^2 = 0.81$.} The splitting energy  is $\Delta =8$ and the observables are defined in \cref{eq:Oxyz}. Monte-Carlo method is used to sample from the noisy quantum state $\rho_N(t)$ with sample size $N_r = 10^4$ 
    and $\rho_Q(t)$ with $N_r = 10^7$ trajectories. The blue and red shaded areas represent the standard deviation of the population for the error-mitigated trajectories $\mathcal{E}_Q\mathcal{E}_N$, the noisy trajectories $\mathcal{E}_N$, respectively. The step size in the numerical implementation is $\delta t = 0.025$. }
    \label{fig:2qubitStong}
\end{figure} 

We have observed that the circuit noise in the numerical experiment in \cref{fig:2qubitStong} induces a large statistical error, which requires a huge sampling size, $10^7$, to obtain a good estimation. This can be attributed to the strong coupling parameter, as can be seen from the estimate of the total resource overhead \eqref{gamma-bound}. To examine the impact of the coupling parameter, 
 we conducted the two-qubit experiment with a weaker coupling parameter $\lambda^2 = 0.01$. The results are shown in  \cref{fig:2spinOxWeak,fig:2spinOyWeak,fig:2spinOzWeak}. One can see that the standard deviation from the error-mitigated trajectories is much smaller and a sample size of $10^5$  already ensures a good performance of our error mitigation scheme. 
\begin{figure}[!htbp]
    \centering
    \begin{subfigure}[b]{0.3\textwidth}
        \centering
        \includegraphics[width=\textwidth]{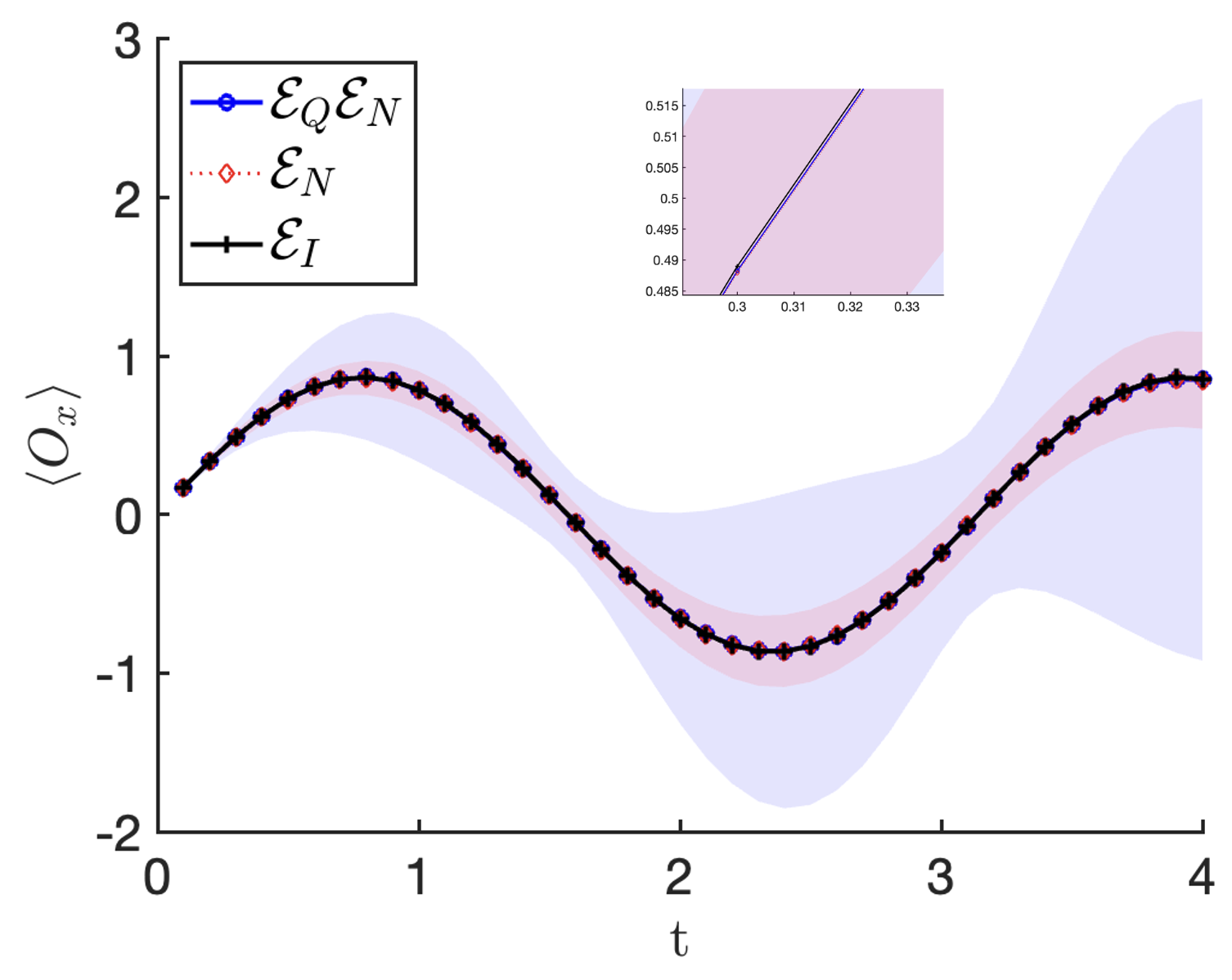}
        \caption{$O_x$}
        \label{fig:2spinOxWeak}
    \end{subfigure}
    \hfill
    \begin{subfigure}[b]{0.3\textwidth}
        \centering
        \includegraphics[width=\textwidth]{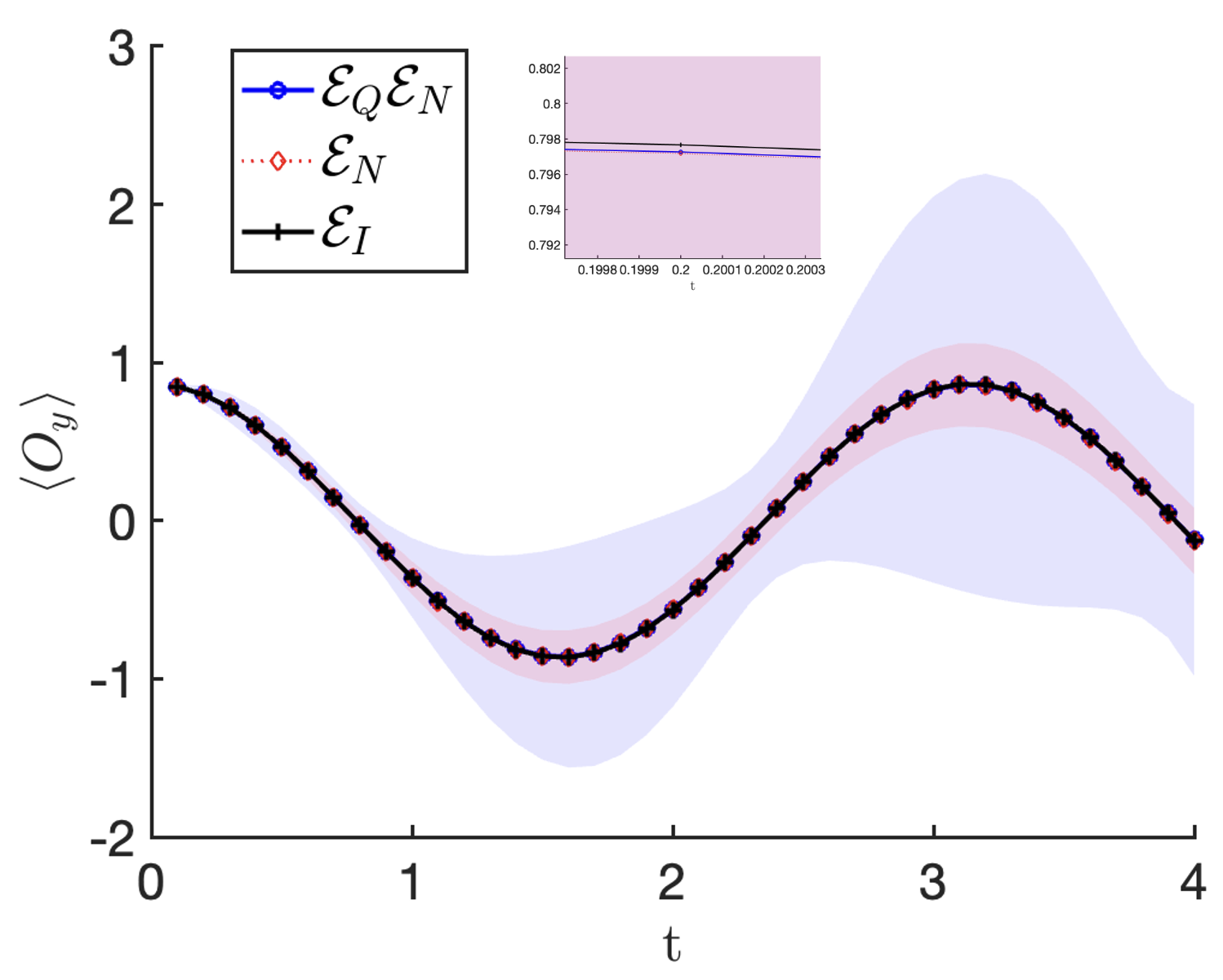}
        \caption{$O_y$}
        \label{fig:2spinOyWeak}
    \end{subfigure}
    \hfill
    \begin{subfigure}[b]{0.3\textwidth}
        \centering
        \includegraphics[width=\textwidth]{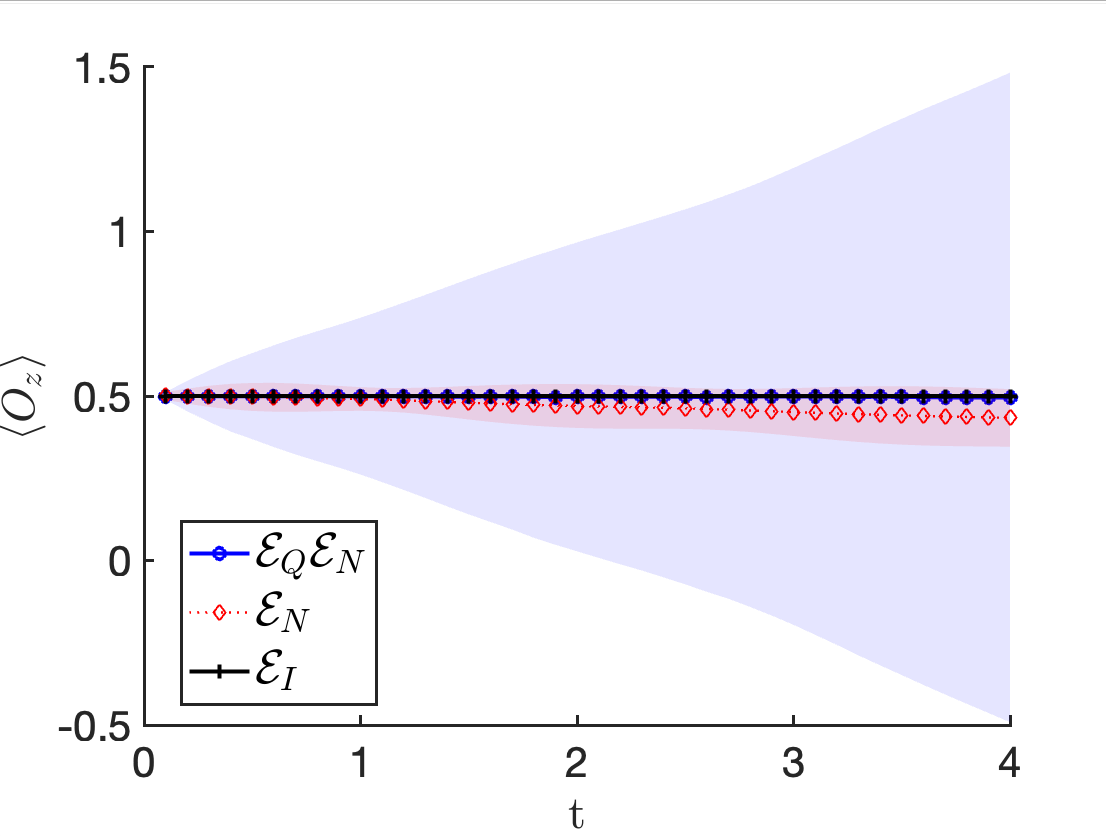}
        \caption{$O_z$}
        \label{fig:2spinOzWeak}
    \end{subfigure}
     \caption{\textbf{Error mitigation of a 2-qubit system coupled to a boson bath \eqref{eq: sb-2qubits} with coupling strength $\lambda^2 = 0.01$.} The splitting energy $\Delta =2$ in $H_S$. Monte-Carlo method is used to sample the measurement of the noisy quantum state $\rho_N(t)$ with sample size $N_r = 10^4$ and $\rho_Q(t)$ with $N_r = 10^5$. The blue and red shaded area represents the standard deviation of the population for the error mitigated trajectories $\mathcal{E}_Q\mathcal{E}_N$ and the noisy trajectories $\mathcal{E}_N$. The step size is set to $\delta t = 0.1$. }
\end{figure}

\section{Discussion}

In this work, we presented a quantum error mitigation scheme designed to suppress circuit noise on near-term quantum devices. Specifically, we extend the probability error cancellation method—originally developed for Markovian noise—to the non-Markovian regime. Our approach employs bath correlation functions as input, a common modeling framework for non-Markovian open quantum systems. In addition to detailing the algorithms and their implementations, our key contributions include establishing error bounds for both the approximation and the statistical errors, thus providing theoretical performance guarantees. Notably, the error bounds are found to be intimately tied to the spectral properties of the quantum environment.  Therefore, our results reveal how certain spectral properties of the environment critically influence the effectiveness of quantum error mitigation—an insight that could guide the selection of physical environments that ease the error mitigation efforts. 

Our mathematical formulation relies on a weak coupling assumption, justified by the fact that many current quantum computing devices operate in this regime. Furthermore, by extending the expansion in \cref{eq: M2} to include higher-order terms, we obtain a more accurate representation of the non-Markovian noise. {For example, by incorporating the operator in the $O(\lambda^4)$ terms, as presented in \cref{appd:highorder},   the error can be reduced to  $O(\lambda^6),$ thus further minizing the effective of the noise and allowing to achieve even more accurate error mitigation. }

\section{Acknowledgements }
The authors would like to acknowledge the support of this research by NSF Grants DMS-2111221 and CCF-2312456. 

\section*{Author contributions statement}
K.W. conducted the numerical experiment(s). K.W. and X.L. analyzed the methods and the numerical results.  Both authors reviewed the manuscript.

\bibliographystyle{alpha}
\bibliography{bibtex}

\appendix
\section{{Fourth-Order Representation of the non-Markovian noise}}
\label{appd:highorder}
{
The fourth-order approximation of the non-Markovian noise generator is
{\small
\begin{equation}
\begin{aligned}
&\mathcal{L}_N(t) \rho(t)= \lambda^2\sum_{\mu = 1}^{\mu_{\max}} \int_0^t\mathcal{L}(\widehat{T}_\mu , \widehat{T}_\mu(-t_2)e^{i\omega_{\mu}t_2})\mathrm{d}t_2
    \rho(t)\\
     &+ \lambda^4 \sum_{\mu_1,\mu_2=1}^{\mu_{\max}}\int_0^t\int_0^t\int_0^{t_3}\mathcal{R}(\widehat{T}_{\mu}, \widehat{T}_{\mu}(t_2-t)e^{i\omega_{\mu}(t-t_2)} ,\widehat{T}_{\mu}(t_3-t)e^{i\omega_{\mu}(t-t_3)}, \widehat{T}_{\mu}(t_4-t)e^{i\omega_{\mu}(t-t_4)})\rho(t)\mathrm{d}t_4\mathrm{d}t_3\mathrm{d}t_2 \\
     &+\lambda^4\sum_{\mu_1,\mu_2=1}^{\mu_{\max}}\int_0^t\int_0^{t_2}\int_0^{t_3}\mathcal{S}(\widehat{T}_{\mu}, \widehat{T}_{\mu}(t_2-t)e^{i\omega_{\mu}(t-t_2)}, \widehat{T}_{\mu}(t_3-t)e^{i\omega_{\mu}(t-t_3)}, \widehat{T}_{\mu}(t_4-t)e^{i\omega_{\mu}(t-t_4)})\rho(t)\mathrm{d}t_4\mathrm{d}t_3\mathrm{d}t_2 , \\
\end{aligned}
\end{equation}
where 
\begin{equation}
\begin{aligned}
    \mathcal{R}(F_\mu, G_\mu, H_\mu, J_\mu)\rho &= \{F^\dag_{\mu_1}, G_{\mu_1} \rho J^\dag_{\mu_2} H_{\mu_2}\} + \{F_{\mu_1},G_{\mu_2}\rho J_{\mu_1}^\dag H_{\mu_2}^\dag\}+  \{F_{\mu_1},G_{\mu_2}\rho J_{\mu_2}^\dag H_{\mu_1}^\dag\}+ c.c.  \\ 
    \mathcal{S}(F_\mu, G_\mu, H_\mu, J_\mu)\rho &= \{F_{\mu_1},  \rho J^\dag_{\mu_2} H_{\mu_2}G_{\mu_1}^\dag\} + \{F_{\mu_1},\rho J_{\mu_1}^\dag H_{\mu_2}^\dag G_{\mu_2}\}+  \{F_{\mu_1},\rho J_{\mu_2}^\dag H_{\mu_1}^\dag G_{\mu_2}\}+ c.c.  .\\
\end{aligned}
\end{equation}}
}

\end{document}